\newtheorem{theorem}{Theorem}
\newtheorem{conjecture}{Conjecture}
\newtheorem{lemma}{Lemma}
\newtheorem{remark}{Remark}
\newtheorem{definition}[remark]{Definition}
\newcommand\Tree{\mathbb{T}}
\newcommand{\fptas}{\mathsf{FPTAS}}
\newcommand{\bD}{\boldsymbol{D}}
\newcommand{\bM}{\mathbf{M}}
\newcommand{\C}{\mathcal{C}}
\newcommand{\cF}{\mathcal{F}}
\newcommand{\beq}{\begin{eqnarray}}
\newcommand{\eeq}{\end{eqnarray}}
\newcommand{\beqn}{\begin{equation}}
\newcommand{\eeqn}{\end{equation}}
\newcommand{\Z}{\mathbb{Z}}
\newcommand{\mc}{\mathcal}
\renewcommand{\P}{\mc{P}}
\newcommand{\bB}{\mathbf{B}}
\newcommand{\dd}{\mathbf{d}}
\newcommand{\integers}{\mathbb{Z}}
\newcommand{\T}{\mathbb{T}}
\newcommand{\even}{\mathrm{even}}
\newcommand{\odd}{\mathrm{odd}}
\newcommand{\Tsaw}{T_{\mathrm{saw}}}
\newcommand{\dist}{\mathrm{dist}}
\newcommand{\rp}{p}
\newcommand{\newlbd}{{2.48}}
\newcommand{\trz}{\operatorname{tr}}
\title{Improved Bounds on the Phase Transition for the Hard-Core Model
in 2-Dimensions}
\author{
Juan C. Vera\thanks{Department of Econometrics and Operations Research,
Tilburg University,
5000 LE Tilburg,
The Netherlands.
Email: j.c.veralizcano@uvt.nl.
}
\and
Eric Vigoda\thanks{School of Computer Science, Georgia
Institute of Technology, Atlanta GA 30332.
Email: vigoda@cc.gatech.edu.
Research supported in part by NSF grant CCF-1217458.}
\and
Linji Yang\thanks{Facebook, Inc.  Menlo Park, CA.
Email: yang.linji@gmail.com.}
}
\begin{document}

\maketitle

\thispagestyle{empty}

\begin{abstract}
For the hard-core lattice gas model defined on independent sets weighted
by an activity~$\lambda$, we study the critical activity $\lambda_c(\integers^2)$
for the uniqueness/non-uniqueness threshold on the 2-dimensional integer lattice $\integers^2$.
The conjectured value of the critical activity is approximately $3.796$.
Until recently, the best lower bound followed from algorithmic results of Weitz (2006).
Weitz presented an $\fptas$ for approximating the partition function for graphs of constant
maximum degree $\Delta$ when $\lambda<\lambda_c(\Tree_\Delta)$ where
$\Tree_\Delta$ is the infinite, regular tree of degree $\Delta$.  His result
established a certain decay of correlations property
called strong spatial mixing (SSM) on $\integers^2$ by proving that
SSM holds on its self-avoiding walk tree $\Tsaw^\sigma(\integers^2)$ where
$\sigma=(\sigma_v)_{v\in \integers^2}$ and $\sigma_v$ is an ordering on the neighbors of vertex $v$.
As a consequence he obtained that $\lambda_c(\integers^2)\geq\lambda_c( \Tree_4) = 1.675$.
Restrepo et al.~(2011) improved Weitz's approach for the particular case of $\integers^2$
and obtained that $\lambda_c(\integers^2)>2.388$.  In this paper, we establish
an upper bound for this approach, by showing that, for all $\sigma$,
SSM does not hold on $\Tsaw^\sigma(\integers^2)$ when $\lambda>3.4$.
We also present a refinement of the approach of
Restrepo et al. which improves the lower bound to
$\lambda_c(\integers^2)>\newlbd$.
\end{abstract}

\newpage

\setcounter{page}{1}

\section{Introduction}
The hard-core model is a model of a gas composed of particles of non-negligible
size and consequently configurations of the model are independent sets~\cite{BergSteif,GauntFisher}.
For a (finite) graph $G=(V,E)$ and an activity $\lambda>0$ (corresponding to
the fugacity of the gas), configurations of the model are
the set $\Omega$ of independent sets of $G$ where
$\sigma\in\Omega$ has weight $w(\sigma) = \lambda^{|\sigma|}$.
The Gibbs measure is defined as $\mu(\sigma) = w(\sigma)/Z$ where
$Z=\sum_{\eta\in\Omega} w(\eta)$ is the partition function.

A fundamental question for statistical physics models, such as the
hard-core model, is whether there exists a unique or there are multiple
infinite-volume Gibbs measures on $\integers^2$. An equivalent question
is whether the influence of the boundary on the origin decays in the limit.
More formally, for a box in $\integers^2$ of side length $2L+1$ centered
around the origin, let $\rp_L^\even$ ($\rp_L^\odd$) denote the marginal
probability that the origin is unoccupied conditional on the even (odd, respectively)
vertices on the boundary being occupied.  If
\[  \lim_{L\rightarrow\infty} \left|\rp_L^\odd -\rp_L^\even\right| = 0
\] then there is a unique Gibbs measure on $\integers^2$, and
if this limit is $>0$ then there are multiple Gibbs measures.
It is believed that there is a critical activity $\lambda_c(\integers^2)$ such
that for $\lambda<\lambda_c(\integers^2)$ uniqueness holds,
and for $\lambda>\lambda_c(\integers^2)$ non-uniqueness holds.
For the infinite, regular tree $\Tree_\Delta$ of degree $\Delta$
it is easy to show that
$\lambda_c(\T_\Delta) = (\Delta-1)^{\Delta-1}/(\Delta-2)^\Delta$ \cite{Kelly}.

There are long-standing heuristic results which suggest that
$\lambda_c(\integers^2)\approx 3.796$ \cite{GauntFisher,BET,Racz}.
For the upper bound on the critical activity, a
classical Peierls' type argument implies
$\lambda_c(\integers^2)=O(1)$~\cite{Dobrushin},
and Blanca et al. \cite{BGRT} improved this upper bound to show $\lambda_c(\integers^2)<5.3646$.
Our focus is on the lower bound.

Weitz \cite{Weitz} showed that $\lambda_c(\integers^2) \geq
\lambda_c(\T_4) = 27/16=1.6875$.  His result followed from
the algorithmic result.  For
all graphs with constant maximum degree $\Delta$, $\lambda<\lambda_c(\Tree_\Delta)$,
Weitz~\cite{Weitz} presented an $\fptas$ for approximating the partition function.
A central step in his approach is proving a certain decay of correlations property
known as {\em strong spatial mixing (SSM)} on the graph $G$.
SSM says that for every $v\in V$, every $T\subset V$ and $S\subset T$, and
pair of configurations
$\sigma,\tau$ on $T$ which only differ on $S$ (i.e., $\sigma(T\setminus S)=\tau(T\setminus S))$
then the difference in the influence of $\sigma$ and $\tau$ on
the marginal probability of $v$
decays exponentially in the distance of $v$ from the difference set~$S$
(see Section \ref{sec:prelim} for formal definitions of these concepts).
In contrast, {\em weak spatial mixing (WSM)} only requires that
the influence decays exponentially in the distance to the set~$T$.
For the hard-core model, since fixing a vertex to be unoccupied (or occupied)
can be realized by removing the vertex (or the vertex and its neighbors, respectively),
it then follows that SSM on a graph $G$ is equivalent to WSM for
all vertex induced subgraphs of $G$.

Given $\sigma=(\sigma_v)_{v\in V}$
where $\sigma_v$ is an ordering of the neighbors of $v$,
then Weitz constructs a version of the tree of
self-avoiding walks from $v\in V$ that we will denote as $\Tsaw^\sigma(G,v)$;
see Section \ref{sec:SAW} for its definition.
The key property is that for every $\sigma$,
if for all $v$ SSM holds on $\Tsaw^\sigma(G,v)$ then SSM holds on $G$.
His variant of the self-avoiding walk tree fixes the leaves of the tree (corresponding
to the walk completing a cycle in $G$) to be occupied or unoccupied based
on the ordering $\sigma_v$ for the last vertex $v$ in the corresponding cycle.
He then shows that SSM holds on the complete tree $\Tree_\Delta$,
and hence SSM holds on all trees of maximum degree $\Delta$ when
 $\lambda<\lambda_c(\Tree_\Delta)$.

Restrepo et al.~\cite{RSTVY} improve upon Weitz's approach for $\integers^2$
by utilizing its structure to build a better ``bounding tree''
than $\Tree_\Delta$.  They define a set of branching matrices $\bM_\ell$ for $\ell\geq 4$
corresponding to walks in $\integers^2$ containing no cycles of length $\leq\ell$
(see Section \ref{sec:bm} for a more formal introduction to these notions).
The key point is that, for certain orderings $\sigma$, the tree $\Tsaw^\sigma(\integers^2)$
is a subtree of the tree $T_{\bM_\ell}$
defined by $\bM_\ell$.  They then present a decay of correlation proof by using a suitable message passing approach
for proving SSM for $T_{\bM_\ell}$, and hence for $\Tsaw^\sigma(\integers^2)$ as well.
They show that SSM holds on $T_{\bM_6}$ for $\lambda<2.33$,
and SSM holds on $T_{\bM_8}$ for $\lambda<2.388$.  Consequently, they
establish that $\lambda_c(\integers^2)>2.388$.
In addition, a recent paper \cite{SSY} presents a simpler condition for establishing SSM
based on the connective constant of $\integers^2$, but the bounds obtained by that approach
are currently weaker than~\cite{RSTVY}.

Our first result establishes a limit to these approaches by showing
that SSM does not hold on $\Tsaw(\integers^2)$.  As mentioned earlier, in the
construction of $\Tsaw(\integers^2)$, the assignment for leaves depends on the ordering $\sigma$ which,
for every vertex $v$,
specifies an ordering of the neighbors of $v$.
Since $\integers^2$ is vertex-transitive, it is natural to
define an ordering that is identical for every vertex (e.g., based on an ordering
of the directions $N,S,E$, and $W$), which we refer to as a {\em homogenous ordering}.
Thus, a homogenous ordering $\sigma$ is one where $\sigma_v=\sigma_w$
for all $v,w\in V$.
We prove the following result.
\begin{theorem}
\label{thm:counter}
For all $\sigma$, all $\lambda>3.4$,
SSM does not hold on $\Tsaw^\sigma(\integers^2)$.
Moreover, for all homogenous $\sigma$, all $\lambda>3$,
SSM does not hold
on $\Tsaw^\sigma(\integers^2)$.
 \end{theorem}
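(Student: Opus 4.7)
The plan is to show that SSM fails on $\Tsaw^\sigma(\integers^2)$ by exhibiting a sequence of boundary conditions at growing depth whose influence on the root does not decay. The classical route is to locate a suitably self-similar subtree $T^*\subseteq\Tsaw^\sigma(\integers^2)$ on which the hard-core tree recursion
$$ R \;\mapsto\; F_\lambda(R_1,\ldots,R_k) \;=\; \lambda\prod_{i=1}^{k} \frac{1}{1+R_i}$$
collapses to a one-variable iteration $R\mapsto F_\lambda(R)$ with two stable fixed points, and then transfer this non-convergence to the full SAW tree by monotonicity.

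First I would explicitly build $T^*$ from a controllable family of self-avoiding walks in $\integers^2$ (for instance walks confined to a thin strip, or generated by repeating a short local motif) so that every internal node at depth $k$ has the same multiset of children-types as its predecessors. For a \emph{homogeneous} $\sigma$ the vertex-transitivity of the ordering means cycle-closing leaves of $T^*$ are pinned in a uniform pattern, so the effective branching is as strong as possible and $F_\lambda$ is as rich as possible; for a general $\sigma$ the motif must be robust to every local choice of ordering, which forces a coarser construction and should be what costs the gap from $\lambda>3$ to $\lambda>3.4$. The next step is the dynamical-systems calculation: iterating from the two extreme boundaries $R_0=0$ (all-unoccupied) and $R_\infty=+\infty$ (all-occupied), show that for $\lambda$ above the claimed threshold the fixed point of $F_\lambda$ becomes linearly unstable (i.e.\ $|F_\lambda'(R^*)|>1$) so that $F_\lambda\circ F_\lambda$ acquires an attracting period-two orbit. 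The iterates $F_\lambda^{(k)}(0)$ and $F_\lambda^{(k)}(\infty)$ then remain bounded away from each other for every $k$; this reduces to verifying a small polynomial inequality at $\lambda=3$ and $\lambda=3.4$.

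Finally I would transfer the failure from $T^*$ to $\Tsaw^\sigma(\integers^2)$: take $T$ to be the set of depth-$k$ vertices of $T^*$, and $S=T$; the two boundary conditions on $T$ are "all occupied'' and "all unoccupied'' on $T^*$, with a fixed identical assignment (e.g.\ unoccupied) on every vertex of $\Tsaw^\sigma(\integers^2)$ at depth $\leq k$ that lies outside $T^*$. By the standard monotonicity of hard-core root marginals with respect to pinning a leaf occupied/unoccupied, the root marginal under the two extensions sandwiches the corresponding values computed purely on $T^*$, so their gap is bounded below by a positive constant independent of $k$, contradicting SSM.

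The main obstacle is the first step. The SAW tree of $\integers^2$ is highly irregular, and producing a subtree $T^*$ that (a) provably sits inside $\Tsaw^\sigma(\integers^2)$ for \emph{every} ordering $\sigma$, (b) has a genuinely one-variable recursion after accounting for the cycle-closing pinnings prescribed by $\sigma$, and (c) is rich enough to push the instability threshold of $F_\lambda\circ F_\lambda$ down to $3$ (homogeneous) or $3.4$ (general), is a delicate combinatorial balancing act; in particular the general case must choose the motif so that no adversarial $\sigma$ can defeat the pinning pattern that drives the period-two orbit.
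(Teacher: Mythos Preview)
Your high-level strategy is the paper's: exhibit a subtree $T^*\subseteq\Tsaw^\sigma(\integers^2)$ on which WSM fails at the claimed $\lambda$, and deduce that SSM fails on the full SAW tree. Your transfer step is correct but overcomplicated; the paper simply uses that for hard-core, fixing a vertex unoccupied is the same as deleting it, so SSM on a tree is equivalent to WSM on every induced subtree. Once you pin every depth-$\le k$ vertex outside $T^*$ to unoccupied, the root marginal is \emph{equal} to the one computed on $T^*$, not merely sandwiched.

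The genuine gap is your requirement that the recursion on $T^*$ ``collapse to a one-variable iteration $R\mapsto F_\lambda(R)$,'' i.e.\ that $T^*$ be essentially a regular rooted tree. In $\Tsaw^\sigma(\integers^2)$ a vertex has at most three non-backtracking continuations, and the self-avoidance constraint (together with the $\sigma$-dependent leaf removals) prevents you from maintaining three children uniformly along every branch; the only branching number you can sustain uniformly is $2$, and the rooted binary tree has WSM threshold $\lambda=4$. That is too weak for either bound. The paper's constructions are \emph{multi-type}: for homogeneous $\sigma$ the never-go-South tree reduces to a two-type system ($N$ has three children, $E/W$ have two), and the map $F:[0,1]^2\to[0,1]^2$ has Jacobian spectral radius exactly $1$ at the fixed point when $\lambda=3$, which is the algebraic calculation you anticipate. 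For arbitrary $\sigma$ the subtree (never-go-South, with at least two North steps between any East and any West) reduces to three types; there the failure at $\lambda=3.4$ is certified not by a polynomial identity but by a rigorous interval-arithmetic computation exhibiting two disjoint boxes in $[0,1]^3$ that $F$ swaps, whence a period-two orbit via Brouwer.

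For the construction you correctly flag as the main obstacle, the paper's device is to demand that every vertex of $T^*$ correspond to a walk that is at graph-distance $\ge 2$ in $\integers^2$ from closing any cycle. This guarantees that no leaf of $\Tsaw(\integers^2)$, nor any parent of such a leaf, lies in $T^*$; hence $T^*\subseteq\Tsaw^\sigma(\integers^2)$ regardless of which leaves $\sigma$ fixes occupied.
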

The theorem follows
from considering a tree $T$ that is a subtree of $\Tsaw^\sigma(\integers^2)$ and
establishing the threshold for WSM on $T$.
 The tree $T$ that we consider in the homogenous ordering case
is quite simple.  When $N$ is first in the ordering,
the tree is simply the never-go-south tree (see Section~\ref{sec:counter}).
For any $\Tsaw^\sigma(\integers^2)$ that is based on an inhomogeneous ordering $\sigma$,
we are able to find another general subtree for which
the WSM does not hold when $\lambda = 3.4$. Such an example gives
a strong evidence that in order to prove the SSM for $\integers^2$ when $\lambda$ is close to the conjectured threshold,
the self-avoiding walk tree approach might not be appropriate. There are subtrees of the SAW tree of $\integers^2$ that
have lower WSM threshold and hence one has to figure out an approach to exclude such trees.

We then present an improvement of the approach of
Restrepo et al. \cite{RSTVY} for proving SSM for the trees  $T_{\bM_\ell}$.
 They consider a particular statistic of the marginal distributions of the vertices,
  and prove the correlation decay property inductively on the height.
 The statistics can be viewed as a message passing algorithm, a variant of belief propagation.
The messages they consider are a natural
 generalization of the message which is used to analyze the complete tree up to the
 tree threshold $\lambda_c(\Tree_\Delta)$ (which thereby reproves Weitz's result \cite{Weitz}).
 They establish a so-called DMS condition as a sufficient condition for these messages to imply
 SSM holds on the tree under consideration.  Some of the limitations of their approach
 are that to find the settings for
 the parameters in their messages and the DMS condition, they use a heuristic hill-climbing
 algorithm which might become trapped in  local optima.  In addition, verifying their DMS
 condition is non-trivial.

In this paper, we consider piecewise linear functions for the messages.
As a consequence, we can find these functions by solving a linear program.
This yields improved results and simpler proofs of the desired contraction property.
Consequently, we prove SSM holds for $T_{\bM_6}$ when $\lambda\leq 2.45$
(previously, $2.33$ by the DMS condition) and SSM holds for $T_{\bM_8}$ when
$\lambda \leq \newlbd$ (previously, $2.388$).
This establishes the following theorem.
\begin{theorem}
\label{thm:critical}
$\lambda_c(\integers^2) > \newlbd$.
\end{theorem}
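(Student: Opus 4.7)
}

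The plan is to combine the self-avoiding walk tree machinery with a refined message-passing analysis on the branching-matrix tree $T_{\bM_8}$. Specifically, I would prove that for $\lambda \le \newlbd$, SSM holds on $T_{\bM_8}$; since the tree $\Tsaw^\sigma(\integers^2)$ is a subtree of $T_{\bM_8}$ for an appropriate choice of ordering $\sigma$ (by the construction of Restrepo et al.~\cite{RSTVY}), SSM then holds on $\Tsaw^\sigma(\integers^2)$, and by Weitz's reduction this transfers to SSM on $\integers^2$. Since SSM implies uniqueness of the infinite-volume Gibbs measure, this yields $\lambda_c(\integers^2) > \newlbd$.

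To establish SSM on $T_{\bM_8}$ I would use a potential-function / message-passing approach as in~\cite{RSTVY}. Let $R_v$ denote the ratio $Z_v^{\text{out}}/Z_v^{\text{in}}$ of partition functions with $v$ forced unoccupied vs.\ occupied in the subtree rooted at $v$. Ratios at a parent are computed from the children by the standard hard-core recursion $F(R_1,\dots,R_d)=\prod_i R_i/(1+\lambda R_i)$ restricted according to the branching matrix $\bM_8$, which tracks walks in $\integers^2$ that avoid all cycles of length $\le 8$. I would introduce a monotone change of variable $\phi$ and analyze the induced recursion on the messages $\phi(R_v)$; the goal is a contraction
\[
\sum_i |\phi'(R_{\text{parent}})|\,|F_i| \, |1/\phi'(R_i)| \;\le\; \alpha < 1,
\]
taken as a worst case over the allowed configurations of children indexed by $\bM_8$ and over the feasible ranges of the $R_i$'s. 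Iterating this contraction down the tree gives exponential decay of $|\phi(R)-\phi(R')|$ under arbitrary boundary differences, which is precisely SSM.

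The novelty in the proof would be to take $\phi$ (equivalently the derivative $\phi'$) to be piecewise linear in $\log R$, rather than the particular analytic form used in~\cite{RSTVY}. After discretizing the domain of $R$ into a finite number of intervals, the contraction inequality becomes, for each cell and each branching pattern allowed by $\bM_8$, a finite collection of linear constraints in the slope parameters of $\phi'$; feasibility at the target value $\lambda=\newlbd$ can then be certified by solving a single linear program. This replaces the heuristic hill-climbing/DMS-condition search of~\cite{RSTVY} by a convex optimization, which both enlarges the feasible set of messages and makes verification mechanical.

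The main obstacle I expect is the combinatorial explosion associated with $\bM_8$: the number of branching patterns (child-type multisets) allowed by avoiding cycles of length $\le 8$ is large, and for each pattern one must verify the contraction over the full range of each child's $R_i$. Taming this will require (i) restricting $R_i$ to a provable invariant interval $[R_{\min}(\lambda),R_{\max}(\lambda)]$ derived from the hard-core recursion itself, (ii) exploiting monotonicity of $F$ so that worst-case children are attained at endpoints of the piecewise linear pieces, and (iii) using a coarse enough partition of $\log R$ to keep the LP tractable while fine enough that the feasibility region at $\lambda=\newlbd$ is nonempty. Given this combinatorial bookkeeping, the analytical step---deriving the contraction inequality from the piecewise linear messages and reading off $\lambda_c(\integers^2)>\newlbd$---should be routine.
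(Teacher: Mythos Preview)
Your proposal follows essentially the same route as the paper: establish SSM on $T_{\bM_8}$ via a potential-function contraction, parametrize the potentials as piecewise linear functions, certify feasibility at $\lambda=\newlbd$ by a linear program, and transfer to $\integers^2$ via Weitz's reduction. Two technical choices in the paper are worth flagging, because they are precisely what make the LP linear and tractable. First, the paper does not take $\phi'$ (or $\phi$) piecewise linear; it works with the unoccupied probability $\alpha$ and takes $\Psi_i(\alpha):=(\alpha\,\phi_i'(\alpha))^{-1}$ piecewise linear, so that after clearing the denominator the contraction condition reads $(1-\alpha_i)\sum_{w}\Psi_{t(w)}(\alpha_w)<\Psi_i(\alpha_i)$, with the unknown functions appearing \emph{linearly} on both sides once interval endpoints are substituted. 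With $\phi'$ itself piecewise linear, the children's potentials sit in the denominator of your contraction inequality and the constraints are not linear without a further change of variables; you would end up reinventing the $\Psi$ parametrization. Second, the paper's potentials are \emph{type-dependent}---one $\Psi_i$ per type of $\bM_8$, not a single $\phi$---and this added freedom is what lifts the bound from the DMS value $2.388$ up to $\newlbd$. The ``combinatorial explosion'' you anticipate is handled in the paper not by coarsening the discretization but by first reducing $\bM_8$ from $922$ to $162$ types via the consistent-partition reduction of Section~\ref{sec:red}, and then using nonuniform subinterval sizes and column generation (Section~\ref{sec:LP-solve}). As a minor aside, your displayed recursion $F(R_1,\dots,R_d)=\prod_i R_i/(1+\lambda R_i)$ is not the hard-core tree recursion for either convention of $R$; the paper works directly with $\alpha_v=(1+\lambda\prod_w\alpha_w)^{-1}$.
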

The rest of the paper is organized in the following way.
We formally define WSM and SSM in Section \ref{sec:prelim} and also present there the self-avoiding walk tree
construction used by Weitz \cite{Weitz}. In Section \ref{sec:bm}, we will introduce branching matrices and present the framework of Restrepo et al. \cite{RSTVY} in a manner tailored to our work.
 In Section \ref{sec:counter} we will discuss limitations of Weitz's approach by showing several counter-examples. Finally, in Section \ref{sec:lp} we discuss
our linear programming approach for proving SSM,
which yields an improvement on the lower bound for the uniqueness threshold of the hard-core model on $\integers^2$.

\section{Preliminaries}\label{sec:prelim}

\subsection{Definitions of WSM and SSM}
For a graph $G=(V,E)$ and $S\subset V$, we define the boundary condition $\sigma$ on $S$ to be a
fixed configuration on $S$.
For a boundary condition $\sigma$, let $\rp_v(\sigma)$ be the unoccupied probability of vertex~$v$
in the Gibbs distribution $\mu$ on $G$ conditional on $\sigma$.
We now formally define WSM and SSM.

\begin{definition}[Weak Spatial Mixing]\label{def:wsm}
For the hard-core model at activity $\lambda$, for finite graph
$G=(V,E)$, {\em WSM} holds  if there exists $0<\gamma < 1$ such that for every
$v\in V$, every $S\subset V$, and every two configurations
$\sigma_1,\sigma_2$ on $S$,
$$\left|\rp_v(\sigma_1)-\rp_v(\sigma_2)\right|
~\leq~\gamma^{\dist(v,S)}$$ where $\dist(v,S)$ is the graph distance
(i.e., length of the shortest path) between $v$ and (the nearest
point in) the subset $S$.
\end{definition}
For an infinite graph $G$, we define the WSM threshold for $G$ as
\[ WSM(G) = \inf\{\lambda: \mbox{WSM does not hold on $G$ at activity $\lambda$}\}.
\]

\begin{definition}[Strong Spatial Mixing]
For the hard-core model at activity $\lambda$, for finite graph
$G=(V,E)$, {\em SSM} holds if there exists a $0<\gamma < 1$ such that
 for every $v\in V$, every $S\subset V$, every $S'\subset S$,
and every two configurations $\sigma_1,\sigma_2$ on $S$ where
$\sigma_1(S\setminus S')=\sigma_2(S\setminus S')$,
$$\left|\rp_v(\sigma_1)-\rp_v(\sigma_2)\right|
~\leq~\gamma^{\dist(v,S')}.$$
\end{definition}
Finally, let $SSM(G)$ denote the SSM threshold for $G$, defined analogously to $WSM(G)$ but with respect to SSM.

To contrast the definitions of WSM and SSM, note that in WSM the influence decays exponentially in
the distance to the boundary set $S$, whereas in SSM it is exponentially in the distance to the subset $S'$ of the
boundary that they differ on.
An important observation that we repeat from the Introduction to emphasize it, is that
for the hard-core model, for a graph $G$, SSM holds if and only if for all (induced vertex) subgraphs of $G$
WSM holds.

\subsection{Weitz's SAW Tree}\label{sec:SAW}
We now detail Weitz's self-avoiding walk tree construction \cite{Weitz}.
Given $G=(V,E)$, we fix an arbitrary ordering
$\sigma_w$ on the neighbors of each vertex $w$ in $G$.
Let $\sigma=(\sigma_w)_{w\in V}$ be the collection of these orderings.
For each  $v\in V$, the tree $\Tsaw^\sigma(G,v)$ rooted at $v$ is constructed as follows.

Consider the
tree  of self-avoiding walks originating from $v$,
including the vertices closing a cycle in the walks as leaves.
Denote this tree by $\Tsaw(G,v)$.
 We assign a boundary condition to the leaves by the following rule.
 Each leaf closes a cycle in $G$, so say the leaf corresponds to vertex $w$ in $G$
 and the path leading to the leaf corresponds to the path $v \to z_1 \to \dots \to z_j \to w\to v_1 \to \dots \to v_\ell\to w$ in $G$.
 Then if $v_1> v_\ell$ in the ordering $\sigma_w$ we fix this leaf to be unoccupied,
 and if $v_1< v_\ell$ in the ordering $\sigma_w$ we fix this leaf to be occupied.
Since we are in the hard core model, if the leaf is fixed to be unoccupied we simply
 remove that vertex from the tree.  And if the leaf is fixed to be
 occupied, we remove that leaf and all of its neighbors from the tree, i.e.,
 we remove completely the subtree rooted at the parent of that leaf.

If a boundary condition $\Gamma$ is assigned to a subset $S$ of $G$, then the
self-avoiding walk tree can also be constructed consistently to the boundary condition, i.e.,
for a vertex $w\in S$ of $G$, we assign  $\Gamma(w)$
to every occurrence of $w$ in $\Tsaw^\sigma(G,v)$.
  Weitz proves that, for any boundary condition on $G$ and any vertex $v$, the marginal distribution of $v$ on $G$ is the same as the  marginal distribution of the root of $\Tsaw^\sigma(G,v)$ with the corresponding boundary condition. This further implies the following.
\begin{lemma}[Weitz \cite{Weitz}]
\label{thm:weitz}
For a specific $\lambda$, for any $\sigma$,
if for all $v$ SSM holds for $\Tsaw^\sigma(G,v)$,  then SSM holds for $G$.
\end{lemma}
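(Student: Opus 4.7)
The plan is to derive SSM on $G$ directly from SSM on the self-avoiding walk trees, with Weitz's marginal identity (asserted in the sentence immediately preceding the lemma) doing essentially all of the probabilistic work. Fix $\lambda$ and $\sigma$, and suppose SSM holds on $\Tsaw^\sigma(G,v)$ for every $v$, with rate $\gamma_v < 1$. Since $V$ is finite I will take $\gamma := \max_v \gamma_v < 1$ as the candidate SSM rate on $G$ and then verify the SSM inequality on $G$ with this $\gamma$.

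To verify it, fix $v \in V$, $S \subset V$, $S' \subset S$, and configurations $\sigma_1,\sigma_2$ on $S$ agreeing on $S\setminus S'$. Set $T := \Tsaw^\sigma(G,v)$; the tree $T$ already carries the deterministic leaf-assignment coming from $\sigma$, on top of which I lift each $\sigma_i$ to a boundary $\tilde\sigma_i$ on $T$ by setting every occurrence of each $w \in S$ to $\sigma_i(w)$. Weitz's marginal identity then gives $\rp_v^G(\sigma_i) = \rp_{\text{root}}^T(\tilde\sigma_i)$ for $i=1,2$. Moreover $\tilde\sigma_1$ and $\tilde\sigma_2$ agree on $T$ everywhere except on the set $\tilde S' \subset T$ of occurrences of vertices of $S'$, so applying SSM on $T$ reduces everything to the geometric claim $\dist_T(\text{root},\tilde S') \geq \dist_G(v,S')$.

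The geometric claim is immediate from the construction: any vertex at depth $d$ in $T$ is obtained by a self-avoiding walk of length $d$ from $v$ in $G$, so its underlying $G$-label lies at graph distance at most $d$ from $v$; equivalently, no occurrence of a vertex of $S'$ can appear in $T$ at depth strictly less than $\dist_G(v,S')$. Combining the three ingredients yields
\[
\abs{\rp_v^G(\sigma_1)-\rp_v^G(\sigma_2)} \;=\; \abs{\rp_{\text{root}}^T(\tilde\sigma_1)-\rp_{\text{root}}^T(\tilde\sigma_2)} \;\leq\; \gamma^{\dist_T(\text{root},\tilde S')} \;\leq\; \gamma^{\dist_G(v,S')},
\]
which is the SSM inequality on $G$ with rate $\gamma$.

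The one step I expect to require care is the bookkeeping around the cycle-closing leaves of $T$: those leaves receive a deterministic assignment from the ordering $\sigma$, but occurrences of $w\in S$ (including those that close a cycle) must instead be fixed to $\sigma_i(w)$. I will have to verify that Weitz's marginal identity is stated for precisely this combined boundary, and observe that whether a cycle-closing occurrence of some $w\in S'$ is grouped into $\tilde S'$ or into the ``fixed'' deterministic leaves only makes $\dist_T(\text{root},\tilde S')$ larger, so the depth bound above is preserved in either convention. With this bookkeeping in place the rest of the argument is the short chain displayed above.
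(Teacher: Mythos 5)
Your proposal is correct and follows essentially the same route the paper relies on: the paper gives no independent proof, deferring to Weitz's marginal identity (stated in the sentence just before the lemma), and your argument is exactly the standard derivation from that identity --- lift the boundary pair to $\Tsaw^\sigma(G,v)$, apply SSM on the tree, and use the fact that an occurrence of a vertex of $S'$ at depth $d$ comes from a walk of length $d$, so $\dist_T(\mathrm{root},\tilde S')\geq \dist_G(v,S')$. Your closing caveat about the cycle-closing leaves is the right one to flag, and your observation that either convention only increases the tree distance correctly disposes of it.
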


\section{Message Passing Approach for Proving SSM}
\label{sec:bm}
Let us first recall the recurrence of the marginal distributions on trees for the hard-core model.
For now, we fix our infinite tree to be $T$. Let $v$ be a vertex of $T$,
and let $T_v$ denote the subtree of $T$ rooted at $v$. Let $N^-(v)$ denote the children of $v$ in $T_v$.
Let $\alpha_v(\Gamma)$ be the unoccupied probability of vertex $v$ in the subtree $T_v$ rooted at $v$ with boundary condition $\Gamma$.
It is straightforward to establish that $\alpha_v(\Gamma)$ satisfy
the following recurrence:
\begin{equation}
\label{eq:unocc}
\alpha_v{(\Gamma)} = \frac{1}{1 + \lambda \prod_{w\in N^-(v)} \alpha_{w}(\Gamma)}.
\end{equation}
There are two special boundary conditions: one is called the odd boundary condition (denoted as $\Gamma_{o,L}$) which occupies all the vertices at level $L$ when $L$ is odd (and unoccupies when $L$ is even);
 the other is called the even boundary condition  (denoted as $\Gamma_{e,L}$) which occupies all the vertices at level $L$ when $L$ is even (and unoccupies when $L$ is odd).
These two boundary conditions are the extremal ones, meaning that for any other boundary condition $\Gamma$
for the vertices at distance $L$ from the root $r$ of $T$,
$\alpha_r(\Gamma_{e,L}) \le \alpha_r(\Gamma) \le \alpha_r(\Gamma_{o,L})$ when $L$ is even (and with the
inequalities reversed when $L$ is odd).

To see that WSM holds for the tree $T$, it is enough to show that
for the odd and even boundary conditions $\{\Gamma_{o,L}\}_{L\in \mathbb{N}}$ and $\{\Gamma_{e,L}\}_{L\in\mathbb{N}}$, the difference of the marginal probabilities at the root $|\alpha_r(\Gamma_{o,L}) - \alpha_r(\Gamma_{e,L})|$ decay exponentially in $L$.

\subsection{Branching matrices}
Recall that in order to show that uniqueness holds for $\Z^2$ for a certain $\lambda$, it is enough to show that for the same $\lambda$,
SSM holds on a certain tree which is a super-tree of $\Tsaw^\sigma(\Z^2)$.
Due to the regularity of $\Z^2$, in \cite{RSTVY}, deterministic multi-type Galton-Watson trees are proposed to characterize the candidate super-trees.
The trees can be defined by branching matrices in the following way.
A branching matrix $\bM$ is simply a square matrix composed of non-negative integer entries.
\begin{definition}
\label{def:branching-matrix} Given a $t\times t$ branching matrix
$\bM$, $\mathcal F_{\leq\bM}$
is the family of trees which can be generated
under the following restrictions:
\begin{itemize}
\item[$\circ$] Each vertex in tree $T\in \mathcal F_{\leq \bM}$ has its type $i\in \{1,\dots,
t\}$.
\item[$\circ$] Each vertex of type $i$ has at most $M_{ij}$ children of type $j$.
\end{itemize}
\end{definition}
We use $T_{\bM}$ to refer to the tree that is generated by the matrix $\bM$, specifically, we mean the largest tree
in the family $\mathcal F_{\leq \bM}$. The simplest $\bM$ such that $\Tsaw^\sigma(\Z^2)$ is in the family $\cF_{\bM}$
is $\bM = \begin{pmatrix}0&4\\
0&3
\end{pmatrix}$.
In this case, $\T_\bM$ is the complete, regular tree of degree $4$.
Because of the regularity of $\Z^2$, it is clear that $\Tsaw^\sigma(\Z^2)$ is a subtree of
$\T_\bM$.
As shown in \cite{RSTVY},
a more sophisticated set of branching matrices $\bM'$ which contain $\Tsaw(\Z^2)$ in their family
are trees $T_{\bM'}$ corresponding to all walks
of $\Z^2$ truncated when closing a cycle of length less than or equal to a certain constant $\ell$.
Clearly, $T_{\bM'}$ is a super-tree of $\Tsaw{(\Z^2)}$, because any
path in $T_{\bM'}$ will only avoid cycles of length $\leq \ell$ whereas
paths in $\Tsaw^\sigma(\Z^2)$ are avoiding all cycles.

When one tries to avoid a cycle of length $4$, the matrix becomes
$$
\bM'_4=\begin{pmatrix}
0&4&0&0\\
0&1&2&0\\
0&1&1&1\\
0&1&1&0
\end{pmatrix}, $$
where each type is simply representing the various stages of completing a cycle of length $4$ in a walk.
It is easy to verify that $\Tsaw^\sigma(\Z^2)$, for any $\sigma$, is in the family $\cF_{\bM'_4}$.

In $\bM'_4$, we have not yet taken into consideration the effect of the assignments to leaves as
detailed in the construction of $\Tsaw^\sigma$ in Section \ref{sec:SAW}.
 When we do that, we are able to construct much more sophisticated branching matrices
 which yield better bounds.
Therefore, for $\ell\geq 4$, let $\bM_\ell$
denote the branching matrix generating the tree containing all walks in $\integers^2$
truncated when completing a cycle of length $\leq \ell$, where these leaf vertices are occupied or unoccupied
according to the definition in Section \ref{sec:SAW} based on some fixed homogeneous ordering $\sigma$
of the neighbors for every vertex.
By taking into account the boundary condition we obtain a smaller tree
since when a  walk closes a cycle with an occupied assignment to a vertex $u$,
this forces the parent of $u$ to be unoccupied, which further trims down the size of the tree.
These more sophisticated matrices yield a ``tighter'' bound on $\Tsaw^\sigma(\Z^2)$, however
the number of types increase.
For example, for $\ell=4$, whereas $\bM'_4$ has 4 types, $\bM_4$ has 17 types (after
some simplifications), see
\cite{RSTVY} for details of $\bM_4$.  For $\bM_6$ there are 132 types, and for $\bM_8$ there are $922$ types.

\subsection{Contraction Principle}
For each $t$ by $t$ branching matrix $\bM$, we would like to derive a condition such that SSM holds for the tree $T_{\bM}$.
Throughout this paper, for each type $i$, we treat the row $\bM_i$ of $\bM$ as a multi-set and each entry $\bM_i(j)$ of the row denotes the number of elements the set $\bM_i$ has of type $j$.
We use $t(w)$ to denote the type of vertex $w \in \bM_i$.
The following lemma, which is re-stating Lemma 1 from \cite{RSTVY} in
a slightly simpler form that is more convenient for our work,
provides a sufficient condition for SSM to hold for the tree $T_{\bM}$. The proof is in Section \ref{sec:A}.
\begin{lemma}
\label{lem:con}
Let a branching matrix $\bM$ be given. Assume there is $0<\gamma<1$ such that
for each type $i$, there is a positive integrable function $\Psi_i$ where
 \begin{equation}
\label{eq:func}
\frac{1-\alpha_i}{ \Psi_i(\alpha_i)} \sum_{w \in \bM_i} \Psi_{t(w)}(\alpha_w) < \gamma,
\end{equation}
for $\alpha_w$ in the range $[1/(1+\lambda),1]$ for each child $w$ and $\alpha_i = \left(1+\lambda \prod_{w\in \bM_i}\alpha_w\right)^{-1}$ defined in \eqref{eq:unocc} as a function of $\alpha_w$'s.
Then SSM holds for $T_\bM$, i.e., WSM holds for all trees $T$ in the family $\cF_{\le \bM}$ with a fixed rate $\gamma < 1$.
\end{lemma}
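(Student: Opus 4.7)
The plan is to reduce SSM on $T_{\bM}$ to WSM on every subtree $T\in\cF_{\leq\bM}$, and then establish that WSM by introducing a potential $\Phi_i$ chosen so that the lemma's hypothesis becomes exactly a sum-of-Jacobians bound for the tree recurrence in potential coordinates.

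Since fixing a hard-core vertex to be occupied or unoccupied corresponds to deleting the vertex or its closed neighborhood, SSM on $T_{\bM}$ is equivalent to WSM on every vertex-induced subtree of $T_{\bM}$, each of which again lies in $\cF_{\leq\bM}$. I fix such a $T\in\cF_{\leq\bM}$, its root $r$, a depth $L$, and the extremal boundary conditions $\Gamma_o,\Gamma_e$ at level $L$, producing unoccupied marginals $\alpha_v^{(o)},\alpha_v^{(e)}$ at each $v$. By the standard extremality of these boundaries it suffices to show that $|\alpha_r^{(o)}-\alpha_r^{(e)}|=O(\gamma^L)$. One checks that the recurrence \eqref{eq:unocc} maps any $(\alpha_w)\in[0,1]^k$ into $[1/(1+\lambda),1]$, so for every non-leaf $v$ the values $\alpha_v^{(o)},\alpha_v^{(e)}$ lie in this compact admissible range; the lemma's hypothesis therefore applies at every such $v$, with any absent children in $\bM_{t(v)}\setminus N^-(v)$ viewed as unoccupied with $\alpha=1$, which only enlarges the sum in \eqref{eq:func}.

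The key device is the potential
$$\Phi_i(\alpha)\;:=\;\int^{\alpha}\!\!\frac{d\beta}{\beta\,\Psi_i(\beta)},\qquad \phi_v:=\Phi_{t(v)}(\alpha_v),$$
for which $d\phi_v/d\alpha_v=1/(\alpha_v\Psi_{t(v)}(\alpha_v))$. Combined with $\partial\alpha_v/\partial\alpha_w=-(1-\alpha_v)\alpha_v/\alpha_w$ from \eqref{eq:unocc}, the chain rule gives
$$\frac{\partial\phi_v}{\partial\phi_w}\;=\;\frac{1}{\alpha_v\Psi_{t(v)}(\alpha_v)}\cdot\frac{-(1-\alpha_v)\alpha_v}{\alpha_w}\cdot\alpha_w\Psi_{t(w)}(\alpha_w)\;=\;-\,\frac{(1-\alpha_v)\,\Psi_{t(w)}(\alpha_w)}{\Psi_{t(v)}(\alpha_v)},$$
so the hypothesis \eqref{eq:func} translates directly into $\sum_{w}|\partial\phi_v/\partial\phi_w|<\gamma$ at every non-leaf $v$ of $T$. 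Applying the multivariate mean-value theorem to the map $(\phi_w)_{w\in N^-(v)}\mapsto\phi_v$ along the straight line between the two boundary realizations yields the single-step contraction
$$|\phi_v^{(o)}-\phi_v^{(e)}|\;\leq\;\sum_{w\in N^-(v)}\Big|\tfrac{\partial\phi_v}{\partial\phi_w}\Big|\cdot|\phi_w^{(o)}-\phi_w^{(e)}|\;\leq\;\gamma\,\max_{w\in N^-(v)}|\phi_w^{(o)}-\phi_w^{(e)}|.$$
Iterating this from depth $L-1$ (where $\alpha$ has already been projected into $[1/(1+\lambda),1]$ by one step of \eqref{eq:unocc} from the fixed values $\alpha\in\{0,1\}$ at depth $L$) up to the root gives $|\phi_r^{(o)}-\phi_r^{(e)}|\leq\gamma^{L-1}\cdot C$, where $C$ is the oscillation of $\Phi_{t(v)}$ over $[1/(1+\lambda),1]$ - a uniform finite constant since $\Psi_i$ is positive and integrable on that compact interval. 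A final application of the MVT, using the boundedness of $\alpha\Psi_i(\alpha)$ on $[1/(1+\lambda),1]$, converts this to $|\alpha_r^{(o)}-\alpha_r^{(e)}|=O(\gamma^L)$.

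The main obstacle is identifying the correct potential; once one guesses $\Phi_i(\alpha)=\int d\beta/(\beta\Psi_i(\beta))$, the computation of $\partial\phi_v/\partial\phi_w$ reduces mechanically to the expression in the hypothesis, and the rest is a single invocation of the multivariate MVT together with the standard max-norm contraction argument. Two minor subtleties remain: the depth-$L$ leaves have $\alpha\in\{0,1\}$ outside the domain of $\Phi_i$, circumvented by treating depth $L-1$ as the effective boundary where the recurrence has already projected $\alpha$ into the admissible range; and the $\phi$-errors at the root must be converted back to $\alpha$-errors, which is routine given the compactness of $[1/(1+\lambda),1]$ and the positivity of $\Psi_i$.
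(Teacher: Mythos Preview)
Your proposal is correct and takes essentially the same approach as the paper: both introduce the potential $\Phi_i(\alpha)=\int d\beta/(\beta\Psi_i(\beta))$ (equivalently, the paper sets $\Psi_i(x)=(x\,\phi_i'(x))^{-1}$), compute that in these coordinates the one-step recursion has Jacobian entries $-(1-\alpha_v)\Psi_{t(w)}(\alpha_w)/\Psi_{t(v)}(\alpha_v)$, and conclude contraction from the hypothesis. The only cosmetic difference is that the paper parameterizes the boundary by a continuous $\Gamma\in[0,1]$ and bounds $|d\phi_i/d\Gamma|$ recursively, whereas you compare the two extremal boundaries directly via the mean-value theorem; these are interchangeable formalizations of the same linearization, and your handling of missing children (setting $\alpha=1$, which keeps $\alpha_i$ unchanged and only drops nonnegative terms from the sum) and of the depth-$L$ endpoint issue is sound.
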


\subsection{Reduction of the branching matrices $\bM_\ell$}
\label{sec:red}
Usually, when one applies various methods trying to solve the functional inequality \eqref{eq:func}, one has to face the fact that the dimension of the matrix $\bM$ is huge, e.g., $t = 922$ for $\ell=8$ in \cite{RSTVY}.
A natural way to generate
$\bM$ is using a DFS program that enumerates all of the types by remembering the history of the self-avoiding walk. However, there are many types in such a matrix that are essentially the ``same''. Here we provide a rigorous definition of what types are the same and can be reduced, and a heuristic approach for efficiently
finding those types that are the same.

Let $\C$ be a partition of the types in $\bM$, i.e., $\C = \{C_1,C_2,...,C_k\}$ such that $\biguplus_{i=1}^{k} C_i = [t]$.
We define the partition to be consistent with $\bM$, if for every $i \in [k]$, each pair of types $s,s' \in C_i$, the rows $\bM_s$ and $\bM_{s'}$ are the same with respect to $\C$, that is
\begin{equation}
\label{eq:consistent}
\sum_{j\in C_{i'}} \bM_{sj} =  \sum_{j\in C_{i'}} \bM_{s'j}, \text{ for all } i'\in [k].
\end{equation}
\begin{definition}\label{def:reducible}
Given $\bM$ and a partition $\C$ of size $k$ which is consistent, we define the $k$-by-$k$ matrix $\bM^{\C}$ by,
\begin{equation}
\label{eq:reduce}
\bM^{\C}_{ii'}= \sum_{j\in C_{i'}} \bM_{sj}\text{ where  }s \in C_i,
\end{equation}
by \eqref{eq:consistent} the choice of $s\in C_i$ does not matter.

We say $\bM$ is {\em reducible} to a $k$-by-$k$ matrix $\bB$ if there is a consistent partition $\C$ such that $\bB = \bM^{ \C}$.
\end{definition}
\begin{lemma}
\label{lem:red}
For a branching matrix $\bM$ which is reducible to a matrix $\bB$,
 $$\cF_{\le \bB} = \cF_{\le \bM} ~~\textrm{ and }~~ T_{\bB} = T_{\bM}.$$
\end{lemma}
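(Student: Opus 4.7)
The plan is to prove the two equalities by exhibiting an explicit correspondence between typed trees in $\cF_{\le\bM}$ and $\cF_{\le\bB}$, using the consistency condition \eqref{eq:consistent} crucially. Let $\pi:[t]\to[k]$ denote the ``class map'' sending each type $s\in[t]$ to the unique index $i$ with $s\in C_i$.

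First I would show $\cF_{\le\bM}\subseteq \cF_{\le\bB}$ via the map $\Pi$ that takes a tree $T\in\cF_{\le\bM}$ and relabels every vertex of type $s$ by $\pi(s)$. To check that the resulting tree lies in $\cF_{\le\bB}$, fix a vertex of original type $s\in C_i$. By definition of $\cF_{\le\bM}$, for each $j\in[t]$ it has at most $\bM_{sj}$ children of type $j$. Grouping by class, for each $i'\in[k]$ the number of relabeled children of class $i'$ is at most
\[
\sum_{j\in C_{i'}}\bM_{sj}\;=\;\bB_{ii'},
\]
where the equality is exactly \eqref{eq:reduce}. Hence $\Pi(T)\in\cF_{\le\bB}$.

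For the reverse inclusion $\cF_{\le\bB}\subseteq\cF_{\le\bM}$, I would construct a ``lift'' $L(T')$ of any $T'\in\cF_{\le\bB}$ by assigning types in $[t]$ top-down. Pick any representative $s_0\in C_{i_0}$ for the root's class $i_0$. Inductively, suppose a vertex has been assigned type $s\in C_i$, and in $T'$ it has, for every $i'\in[k]$, at most $\bB_{ii'}$ children of class $i'$. Because $\bB_{ii'}=\sum_{j\in C_{i'}}\bM_{sj}$, I can distribute these children among types $j\in C_{i'}$ with no more than $\bM_{sj}$ of them assigned to each type $j$. Recursing on each child produces a valid element of $\cF_{\le\bM}$. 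Note that $\Pi\circ L$ is the identity on $\cF_{\le\bB}$, so $\Pi$ is surjective; together with the first paragraph this gives $\cF_{\le\bB}=\Pi(\cF_{\le\bM})$, which is the intended equality once trees are identified up to the type-collapsing $\pi$.

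For $T_\bB=T_\bM$, recall that $T_\bM$ is the (unique up to isomorphism) tree in $\cF_{\le\bM}$ in which every vertex of type $s$ has exactly $\bM_{sj}$ children of each type $j$, and similarly for $T_\bB$. Applying $\Pi$ to $T_\bM$ produces a tree in which every vertex of class $i$ has exactly $\sum_{j\in C_{i'}}\bM_{sj}=\bB_{ii'}$ children of each class $i'$ (independent of the choice of $s\in C_i$ by consistency), which is exactly $T_\bB$. Conversely $L(T_\bB)=T_\bM$ by the same counting, so the two maximal trees are identified under $\pi$.

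The main obstacle is really conceptual rather than computational: one must be comfortable treating the two families as typed trees modulo the relabeling $\pi$, and the construction of the lift $L$ requires the consistency condition \eqref{eq:consistent} precisely so that the children bounds at any vertex depend only on its class, not on which representative $s\in C_i$ was chosen. Once that is set up, the inductive assignment of types is a straightforward bin-packing that saturates the capacities $\bM_{sj}$ available within each class $C_{i'}$, and the matching largest trees drop out as the saturated case.
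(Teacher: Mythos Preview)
Your proposal is correct and follows essentially the same approach as the paper: the forward direction collapses types via the class map $\pi$, and the reverse direction lifts by a top-down assignment of representatives that distributes children within each class up to the capacities $\bM_{sj}$, using consistency so that the bound depends only on the class. Your treatment is slightly more formal (naming the maps $\Pi$ and $L$ and noting $\Pi\circ L=\mathrm{id}$), and you explicitly flag the conceptual point that the equality is really modulo the relabeling $\pi$, but the argument is the same.
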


Note that since the trees $T_{\bB}$ and $T_{\bM}$ are the same (when one ignores the types),
WSM holds on one iff it holds on the other, and similarly for SSM.  Hence, if we can find a small matrix $\bB$ that $\bM$
is reducible to then we can use $\bB$ to simplify proofs of associated spatial mixing properties.

\begin{proof}
Consider a tree $T\in\cF_{\le \bM}$.  For each vertex in $T$ relabel it by its corresponding type in $\C$.
In other words, if vertex $v\in T$ has type $s\in [t]$ then relabel it to type $i\in [k]$ where $s\in C_i$.
By \eqref{eq:consistent} and \eqref{eq:reduce}, this tree $T$ can be generated by $\bB$ and hence $T\in\cF_{\le \bB}$.

For the reverse mapping, consider a tree $T\in \cF_{\le\bB}$.
First, for the root $v$ of type $i\in [k]$, reassign it an arbitrary type $s\in C_i$.
For a vertex $w$, given its new label $s\in [t]$, by \eqref{eq:consistent} and \eqref{eq:reduce}, the
number of children of $w$ that are of type $i'\in [k]$ is $\leq  \sum_{j\in C_{i'}} \bM_{sj}$.
Hence, we can relabel these children with types in $C_{i'}$ so that they are consistent with row $s$ of $\bM$.
After fixing such a relabeling of the children of $w$, then we continue to the children of $w$.  This method relabels the vertices of $T$ to
types in $[t]$ so that the new labeling can be generated by $\bM$ and hence $T\in\cF_{\le \bM}$.
This proves $\cF_{\le \bM} = \cF_{\le \bB}$ and an identical approach shows that $T_{\bB} = T_{\bM}$.
\end{proof}

Now the question is how to find a small consistent partition $\C$. For a specific value $\lambda < \textrm{WSM}(T_{\bM})$, let $V_\lambda$ be the fixed points of the recurrences of the marginal distributions defined by $\bM$. Our conjecture is the following.
\begin{conjecture}\label{conj:red}
Let the partition $\C(\lambda)$ be the sets of types that have the same value of the fixed points in $V_\lambda$, i.e., for each $C_i \in \C(\lambda)$, for all $c \in C_i$, $V_\lambda(c)$ are the same. If for all $\lambda$, the partitions $\C(\lambda)$ are identical, then $\C$ is a partition that is consistent of $\bM$.
\end{conjecture}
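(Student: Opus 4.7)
The plan is to leverage the hypothesis that $V_\lambda(s) = V_\lambda(s')$ for all $\lambda$ whenever types $s$ and $s'$ lie in the same cell of the common partition $\C$. Fixing such a pair $s, s' \in C_i$ and substituting into the recurrence (\ref{eq:unocc}) yields $\prod_{w \in \bM_s} V_\lambda(t(w)) = \prod_{w \in \bM_{s'}} V_\lambda(t(w))$. Since by hypothesis all types in class $C_{i'}$ share a common fixed-point value $v_{i'}(\lambda)$, this product factors through the partition as
\[ \prod_{i' \in [k]} v_{i'}(\lambda)^{m_s(i')} = \prod_{i' \in [k]} v_{i'}(\lambda)^{m_{s'}(i')}, \qquad \text{with } m_s(i') := \sum_{j \in C_{i'}} \bM_{sj}. \]
The desired consistency condition (\ref{eq:consistent}) is exactly the statement that $m_s(i') = m_{s'}(i')$ for every $i' \in [k]$.

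Taking logarithms converts this multiplicative identity into the additive one $\sum_{i'} (m_s(i') - m_{s'}(i')) \log v_{i'}(\lambda) = 0$ for every $\lambda$ in the interval of interest. The proof therefore reduces to establishing the $\mathbb{Q}$-linear independence of the analytic functions $\log v_1(\lambda),\ldots,\log v_k(\lambda)$, equivalently, the multiplicative independence of the fixed-point values $v_1(\lambda),\ldots,v_k(\lambda)$ viewed as analytic functions of $\lambda$. A concrete route is to expand each $v_{i'}$ as a power series around $\lambda = 0$, where every $v_{i'}(0) = 1$; the first-order coefficient equals $-|\bM_s|$ for $s \in C_i$, which may already fail to separate classes, so one must work with the Taylor coefficients of $\log v_{i'}$ up to sufficiently high order and show that the resulting $k \times \infty$ matrix of coefficients has rank $k$.

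The principal obstacle is exactly this linear-independence step. The hypothesis that the partitions $\C(\lambda)$ are identical only guarantees that $v_{i'}(\lambda) \neq v_{j'}(\lambda)$ pointwise for $i' \neq j'$; it does not a priori rule out an exotic multiplicative coincidence such as $v_1(\lambda)^2 = v_2(\lambda)v_3(\lambda)$ holding identically in $\lambda$. I would attack this via the algebraic structure: the vector $(v_{i'}(\lambda))_{i'\in[k]}$ is the solution to the polynomial system obtained by quotienting (\ref{eq:unocc}) by $\C$, so the $v_{i'}$ are algebraic functions of $\lambda$ defined by explicit resultants, and multiplicative independence over the constant field should follow from a Galois-theoretic or elimination-theory argument showing that no nontrivial monomial in the $v_{i'}$ is a constant. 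If a general proof proves elusive, one can still verify the conjecture for the specific matrices $\bM_\ell$ of interest by computing enough Taylor coefficients at $\lambda = 0$ to certify non-singularity of the associated coefficient matrix, which suffices to justify the reductions employed in Section~\ref{sec:red}.
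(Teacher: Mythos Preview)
The statement you are attempting to prove is presented in the paper as a \emph{conjecture} and is not proved there; the authors use it only as a heuristic for guessing a candidate partition and then verify consistency directly by checking (\ref{eq:consistent}). There is therefore no paper proof to compare your proposal against.

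Your reduction is sound: from $V_\lambda(s)=V_\lambda(s')$ and the recurrence (\ref{eq:unocc}) one indeed obtains $\prod_{i'} v_{i'}(\lambda)^{m_s(i')}=\prod_{i'} v_{i'}(\lambda)^{m_{s'}(i')}$, and (\ref{eq:consistent}) is exactly $m_s=m_{s'}$. You have also correctly identified the crux, namely ruling out nontrivial monomial relations $\prod_{i'} v_{i'}(\lambda)^{n_{i'}}\equiv 1$ among the pairwise distinct class values.

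That gap is genuine and your proposal does not close it. Pairwise distinctness of the $v_{i'}$ does not exclude an identity such as $v_1 v_2\equiv v_3 v_4$, and the Taylor-expansion route stalls immediately: every $v_{i'}(0)=1$ and, differentiating the recurrence, every $v_{i'}'(0)=-1$ regardless of type, so each $\log v_{i'}(\lambda)=-\lambda+O(\lambda^2)$ and the functions $\log v_{i'}$ are never $\mathbb{Q}$-linearly independent in the naive sense (any integer vector with $\sum n_{i'}=0$ passes first order). One is forced to higher-order coefficients, which encode global combinatorial data about $\bM$, and no general rank argument is apparent. The algebraic/Galois route you sketch is plausible but is itself a substantial problem rather than a routine step. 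Your fallback of verifying specific matrices computationally is, in effect, what the paper does, except that the paper bypasses Taylor coefficients altogether and simply checks (\ref{eq:consistent}) directly once a candidate partition is in hand; that direct check is both simpler and logically sufficient, so the conjecture is not needed for any of the paper's results.
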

Using the intuition from Conjecture \ref{conj:red} we are able to find good partitions in practice. We simply run a dynamic programming algorithm on the tree $T_{\bM}$ to calculate an approximation of the fixed points in $V_\lambda$. Once the approximation is good enough, we simply make the partition according to this approximation. We then check the consistency of the partition with $\bM$, and therefore, we know whether the resulting matrix generates the same tree as the original one or not by Lemma~\ref{lem:red}.
Applying this reduction to $\bM_6$ the number of types
goes down from $132$ to $34$, and for $\bM_8$ the number of types goes down from $922$ to $162$.
This significant reduction in the size of the matrices greatly reduces the number of constraints and variables in our
linear programming formulation.  We will use this technique to simplify the branching matrix $\bD_G$
considered in Section \ref{sec:inhomog} for proving Theorem \ref{thm:counter},
and reduce the matrix from 7 types to 3 types.

\section{Upper Bound on the SSM Threshold}
\label{sec:counter}

As described in the introduction, previous approaches for lower bounding $\lambda_c(\integers^2)$ are based on
proving SSM for $\Tsaw^\sigma(\integers^2)$ for some $\sigma$.
To provide a bound on the strength of these approaches we upper bound the SSM threshold for
$\Tsaw^\sigma(\integers^2)$.
We will show that for $\lambda \geq 3.4$,
for all $\sigma$, SSM does not hold for $\Tsaw^\sigma(\integers^2)$.
We also show that for all $\lambda\geq 3$, for any homogeneous ordering $\sigma$,
SSM does not hold for $\Tsaw^\sigma(\integers^2)$.
Note that these results do not imply anything about WSM/SSM on $\integers^2$,
they simply show a limitation on the power of the current proof approaches.

To prove that SSM does not hold on $\Tsaw^\sigma(\integers^2)$ we define a tree $T$ that is
a subtree of $\Tsaw^\sigma(\integers^2)$ and prove that WSM does not hold on $T$ for sufficiently large $\lambda$.

\subsection{Upper Bound for Homogenous Ordering}
\label{sec:homogenous}

We define a branching matrix $\bD_H$ such that $T_{\bD_H}$ corresponds to the never-go-South tree, and prove
that WSM does not hold on this tree when $\lambda>3$.

Since we are assuming a homogeneous ordering $\sigma$, without loss
of generality assume that North is smallest in the ordering.  We construct $\bD_H$ by considering those walks on
$\integers^2$ that only go North, East, and West.
The branching rules can be written in the following finite state machine way:
\[
0. \ O \rightarrow N ~|~ E ~|~ W, \ \ \ 1. \ N\rightarrow N ~|~ E ~|~ W, \ \ \
2. \ E\rightarrow N ~|~ E, \ \ \
3. \ W\rightarrow N ~|~ W,
\]
where $O$ corresponds to the origin and is a transient state so can be ignored when analyzing
the recurrence.
The branching matrix corresponding to the above rule is
\begin{equation}
\label{eq:m1}
D_H =\begin{pmatrix}
1&1&1\\
1&1&0\\
1&0&1
\end{pmatrix},
\end{equation}
where rows/columns 1, 2, and 3 correspond to North, East, and West respectively.

\begin{lemma}\label{prop:Never-South} Let $\sigma$ be homogenous
ordering where $North$ is the smallest in the order.
The tree $T_{\bD_H}$ generated by the branching matrix $\bD_H$ is a subtree of $\Tsaw^\sigma(\integers^2)$.
\end{lemma}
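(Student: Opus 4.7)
The goal is to embed $T_{\bD_H}$ as a rooted subtree of $\Tsaw^\sigma(\integers^2)$. The plan has two steps: first, verify that the walks generated by $\bD_H$ are self-avoiding walks in $\integers^2$ that never close a cycle; second, verify that none of these walks is deleted by Weitz's occupied-leaf pruning under any homogeneous $\sigma$ in which $N$ comes first.

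For the first step, observe that the rules $N\to N\mid E\mid W$, $E\to N\mid E$, $W\to N\mid W$ encoded by $\bD_H$ enforce two geometric facts about the corresponding walk in $\integers^2$: (a) no step is ever $S$, so $y$-coordinates are non-decreasing; and (b) once the walk moves east (resp. west) it may only continue east (resp. west) until the next northward step, so each row is traversed monotonically in the $x$-direction and is then exited forever. These two facts immediately imply that every walk is self-avoiding and never closes a cycle, and hence the walk appears as an internal path of the raw self-avoiding walk tree $\Tsaw(\integers^2,w_0)$ rooted at the origin $w_0$.

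For the second step, I need to rule out that the terminal vertex $w_k$ of a never-go-south prefix $w_0,w_1,\dots,w_k$ is deleted because it is the parent in $\Tsaw$ of a cycle-closing leaf fixed occupied. Any such leaf corresponds to an adjacency $w_k\sim w_i$ in $\integers^2$ with $i\le k-2$, and Weitz's rule assigns it relative to the distinguished neighbors $v_1=w_{i+1}$ and $v_\ell=w_k$ of $w=w_i$. I claim $v_\ell$ must be the northern neighbor of $w_i$: since $y$ is non-decreasing and $|y(w_k)-y(w_i)|\le 1$, the only alternative is $y(w_k)=y(w_i)$, which forces all of $w_i,w_{i+1},\dots,w_k$ to share a row; by property (b), the $x$-coordinate then advances by one at each of the $k-i\ge 2$ steps, so $w_k$ is not adjacent to $w_i$, a contradiction. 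Similarly $v_1\ne N$, since $v_1=N$ would give $w_{i+1}=w_k$ and $k=i+1$. Hence $v_1\in\{E,W\}$ is strictly greater than $v_\ell=N$ in $\sigma_{w_i}$, so Weitz's rule fixes this leaf unoccupied. An unoccupied leaf is simply removed without deleting its parent, so $w_k$ and its never-go-south extensions persist in $\Tsaw^\sigma(\integers^2)$.

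The only step with any real content is the in-row ruling-out that forces $v_\ell=N$; everything else is a direct unpacking of the branching rules of $\bD_H$ and of the leaf-assignment rule from Section~\ref{sec:SAW}. The key conceptual point is that the apparently restrictive transitions $E\to N\mid E$ and $W\to N\mid W$ are exactly what is needed to guarantee that every potential cycle closure from a vertex of $T_{\bD_H}$ points northward from its base, which, under a homogeneous ordering with $N$ smallest, is exactly the condition that protects the prefix from being pruned.
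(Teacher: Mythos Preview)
Your proof is correct and follows essentially the same approach as the paper: first argue that the walks generated by $\bD_H$ are self-avoiding (hence vertices of $\Tsaw(\integers^2)$), then show that any cycle-closing extension from a vertex of $T_{\bD_H}$ must go south, so the corresponding leaf lies to the north of the repeated vertex and is fixed unoccupied under the ordering with $N$ smallest. Your treatment is slightly more explicit than the paper's---you spell out the in-row monotonicity argument that rules out $y(w_k)=y(w_i)$ and you separately verify $v_1\neq N$---but these are elaborations of the same idea rather than a different route.
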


\begin{proof}
In Weitz's construction (as we presented in Section \ref{sec:SAW}), recall that
$\Tsaw(\integers^2)$ denotes the tree of self-avoiding walks of $\integers^2$ originating from the origin,
including the vertices closing a cycle in the walks as leaves (i.e., in $\Tsaw(\integers^2)$ we have not yet
fixed the leaves to be occupied or unoccupied based on the ordering $\sigma$).
The tree $T_{\bD_H}$ consists of all those self-avoiding walks that never go South, and thus,
it is a subtree of $\Tsaw(\integers^2)$.

Now, in the second part of Weitz's construction, some vertices are deleted from
$\Tsaw(\integers^2)$ to obtain $\Tsaw^\sigma(\integers^2)$.
We need to show that no vertex from $T_{\bD_H}$ is deleted.
A vertex is deleted from $\Tsaw(\integers^2)$ because (i) it is an occupied leaf, or (ii) it is the parent of an occupied leaf.
For a leaf $\eta$ in $\Tsaw(\integers^2)$,
the path to $\eta$ in $\Tsaw(\integers^2)$ corresponds to a walk in $\integers^2$ finishing in a cycle.  Since
there are no cycles in the walks corresponding to $T_{\bD_H}$,
we know that $\eta$ does not appear in $T_{\bD_H}$.  This handles case (i). To handle case (ii),
consider a vertex $\tau$ in $T_{\bD_H}$ which is the parent of a leaf $\eta$ in $\Tsaw(\integers^2)$.
As $\eta$ is a leaf, it corresponds to a path finishing a cycle in $\integers^2$, say
the path is $\P_{\eta}= z_1\to \dots \to z_j \to w\to v_1 \to \dots \to v_\ell\to w$. Then, since $\tau$ is the parent of $\eta$
it corresponds to the path $\P_{\tau} = z_1\to \dots \to z_j \to w\to v_1 \to \dots\to v_\ell$. In $\tau$ there are no South moves,
thus, to close a cycle, the edge $v_\ell\to w$ must be a South move.
Therefore in the boundary condition, $\eta$ is fixed to be unoccupied,
as $v_1> v_\ell$ in the ordering $\sigma_w$, because $v_{\ell}$ is at the North of $w$, and  North is smallest in the ordering, by assumption.
\end{proof}

For the tree $T_{\bD_H}$ we can establish its WSM threshold
as stated in the following result.
\begin{lemma}
\label{thm:three}
$$WSM(T_{\bD_H}) = 3.$$
\end{lemma}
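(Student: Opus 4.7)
My plan is to exploit the $E\!\leftrightarrow\!W$ symmetry of $\bD_H$ to collapse the three-type recursion on $T_{\bD_H}$ to a planar map, pin down its symmetric fixed point, and identify the WSM threshold with the spectral-radius condition for the Jacobian.

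Both $\bD_H$ and the extremal boundary conditions $\Gamma_{o,L},\Gamma_{e,L}$ are invariant under swapping the $E$ and $W$ types, so the marginal unoccupation probabilities at type-$E$ and type-$W$ vertices coincide throughout the recursion. Writing $x$ for the value at a type-$N$ vertex and $y$ for the common $E$/$W$ value, recursion~\eqref{eq:unocc} collapses to
\[
F(x,y) \;=\; \left(\frac{1}{1+\lambda xy^{2}},\;\frac{1}{1+\lambda xy}\right)
\]
on $[0,1]^2$. Since $F$ is coordinatewise decreasing, $F^2$ is monotone; the iterates $F^{2n}(0,0)$ and $F^{2n}(1,1)$ sandwich every other boundary iterate, so WSM on $T_{\bD_H}$ is equivalent to $F^2$ having a unique fixed point in $[0,1]^2$.

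Solving $F(x,y)=(x,y)$ yields $x^{*}=1/(2-y^{*})$ with $(\lambda-1)(y^{*})^{2}+3y^{*}-2=0$; in particular $(x^{*},y^{*})=(2/3,1/2)$ at $\lambda=3$, and evaluating the quadratic at $y=1/2$ gives $(\lambda-3)/4$, so $y^{*}<1/2\Leftrightarrow\lambda>3$. Let $|J|$ denote the entrywise absolute value of the Jacobian of $F$ at $(x^{*},y^{*})$, a $2\!\times\!2$ positive matrix. Using the fixed-point identity $\lambda x^{*}(y^{*})^{2}=1-y^{*}$, a short calculation gives $\trz(|J|)-\det(|J|)=2(1-y^{*})$; since $\det(|J|)<0$, a sign-check of the characteristic polynomial at $1$ yields $\mu_{\max}(|J|)>1\Leftrightarrow 2(1-y^{*})>1\Leftrightarrow\lambda>3$. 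Thus the spectral radius $\mu_{\max}(|J|)^{2}$ of $F^{2}$ at $(x^{*},y^{*})$ crosses $1$ exactly at $\lambda=3$.

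For $\lambda>3$, Perron--Frobenius applied to $|J|$ furnishes a strictly positive unstable eigenvector $v$; linearization shows $F^{2}\bigl((x^{*},y^{*})\pm\epsilon v\bigr)$ moves away from $(x^{*},y^{*})$ in the coordinate partial order, and monotonicity of $F^{2}$ then forces $F^{2n}(1,1)\to L_{1}>(x^{*},y^{*})>L_{0}\leftarrow F^{2n}(0,0)$, so WSM fails. For $\lambda<3$ the Jacobian at $(x^{*},y^{*})$ contracts, so $(x^{*},y^{*})$ is locally attracting under $F^{2}$; the main obstacle is extending this to global uniqueness of the $F^{2}$-fixed point on $[0,1]^2$, since local contraction does not on its own preclude a distant attracting fixed point. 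I would close the gap by working in the log-odds coordinates $R=\log((1-\alpha)/\alpha)$, in which the recursion becomes a composition of sums of the concave function $r\mapsto -\log(1+e^{r})$; a Perron-weighted $\ell^{1}$ contraction argument in these coordinates delivers global contraction of $F^{2}$ whenever $\mu_{\max}(|J|)<1$, yielding uniqueness and hence WSM.
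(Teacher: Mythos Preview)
Your approach is essentially the paper's: collapse by the $E\!\leftrightarrow\!W$ symmetry to the two-variable map $F(x,y)=\bigl((1+\lambda xy^{2})^{-1},(1+\lambda xy)^{-1}\bigr)$, locate the unique fixed point, and read off the threshold from the Jacobian's spectral radius crossing $1$. The paper does this via the reduction of $\bD_H$ to the $2\times 2$ matrix $\bB=\begin{pmatrix}1&2\\1&1\end{pmatrix}$ and then writes out the eigenvalue $\rho(\lambda)$ explicitly, checking $\rho(3)=1$ at $(x_0,y_0)=(2/3,1/2)$.

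A couple of points of comparison. Your identity $\trz(|J|)-\det(|J|)=2(1-y^{*})$, together with $y^{*}<1/2\Leftrightarrow\lambda>3$, is a cleaner route to the crossing than the paper's direct eigenvalue formula. Your argument for $\lambda>3$ (Perron eigenvector of $|J|$, linearization of $F^{2}$, then order-monotonicity of $F^{2}$ to push the extremal iterates away from $(x^{*},y^{*})$) is genuinely more rigorous than the paper's one-line appeal to the fixed point being ``repelling''.

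On the $\lambda<3$ side you correctly flag that local contraction at $(x^{*},y^{*})$ does not by itself force $F^{2n}(\bar 0)$ and $F^{2n}(\bar 1)$ to coalesce, and you sketch a Perron-weighted contraction in log-odds coordinates. The paper does no more here: it simply asserts that spectral radius below $1$ suffices, citing a dynamical-systems text. So your proposal is at least as complete as the paper's proof; the global-contraction step you outline is the right kind of argument but is not carried out in either place.
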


The second part of Theorem \ref{thm:counter} concerning homogenous orderings
follows as an immediate corollary of Lemmas \ref{prop:Never-South} and \ref{thm:three}.

\begin{proof}[Proof of Lemma \ref{thm:three}]
Using the partition $\{N\},\{E,W\}$ of $\{N,E,W\}$ the matrix $\bD_H$ is reduced,
as defined in Definition \ref{def:reducible},
to the following $2\times 2$ matrix:
\begin{equation}
\label{eq:m2}
\bB =\begin{pmatrix}
1&2\\
1&1
\end{pmatrix}.
\end{equation}
From Lemma \ref{lem:red} the matrices $\bB$ and $\bD_H$ generate the same family of trees.
Now the recurrences for the marginal distributions of both types derived from \eqref{eq:unocc} are
\begin{equation}
\label{eq:xy}
F(x,y) = \left( \frac{1}{1 + \lambda x y^2},  \frac{1}{1 + \lambda xy}\right).
\end{equation}
Using some algebra, we are able to determine the fixed points of $F(x,y)$ for $\lambda > 1$
\begin{equation*}
(x_0,y_0) = \left(x_0(\lambda), y_0(\lambda)\right) = \left(\frac{4\lambda + \sqrt{8\lambda + 1} - 1}{8\lambda}, \frac{\sqrt{8\lambda +1}-3}{2(\lambda -1)}\right).
\end{equation*}
We just need to check the eigenvalues of the Jacobian of the recurrences at the fixed point, see e.g., \cite{Robinson}: If the largest eigenvalue is greater than~$1$, then the function around the fixed point is repelling and hence it is impossible for the boundary conditions to converge to this unique fixed point.
If the largest eigenvalue is strictly less than $1$, the function is contracting to the fixed point in its neighborhood.
The Jacobian at the fixed point $(x_0,y_0)$ is the following:
\begin{equation}
\label{eq:jac}
J(\lambda)
=
\begin{pmatrix}
\lambda x_0^2 y_0^2 &2\lambda y_0 x_0^3\\
\lambda y_0^3 &\lambda x_0y_0^2
\end{pmatrix}
.
\end{equation}
Denote the trace of $J(\lambda)$ as $\trz(J(\lambda))= \lambda x_0 y_0^2(x_0+1)$  and
its determinant as $\det(J(\lambda)) = -\lambda^2 x_0^3 y_0^4$.
 The largest eigenvalue of $J(\lambda)$ is then
\[
\rho(\lambda) = \frac{\trz(J(\lambda))}{2} + \left(\frac{\trz(J(\lambda))^2}{4} - \det(J(\lambda))\right)^{1/2},
\]
It is easy to check that $\rho(\lambda)$ is increasing and that for $\lambda =3$, $x_0(3) = 2/3$, $y_0(3) = 1/2$ and $\rho(3) = 1$.
\end{proof}

\subsection{Ordering-Independent Subtree for $\Tsaw$}
\label{sec:inhomog}
In this section, we will define a branching matrix $\bD_G$ such that the generated tree $T_{\bD_G}$ is a
subtree of $\Tsaw^\sigma$ independently on the ordering $\sigma_w$ of edges for each vertex $w$.
This new tree $T_{\bD_G}$ never goes South, and in particular it is a subtree of the tree $T_{\bD_H}$ defined in the previous section. This new tree has further structure to ensure that its leaves are at least distance two from the leaves of the self-avoiding walk tree $\Tsaw(\integers^2)$,
which implies that $T_{\bD_G}$ is a subtree of $\Tsaw^\sigma(\integers^2)$ for every boundary condition $\sigma$.
To achieve this property, we add to the never-go-South tree the rule that if
the walk goes East there must be at least two North steps before it goes West (and similarly, for West to East).
To achieve this we need to remember the last two steps.

The tree is constructed by the following rules:
\[
\begin{array}{lll}
&0.\  O \rightarrow N~|~E~|~W,
\\
1.\  N \rightarrow NN~|~NE~|~NW,
&2.\   W \rightarrow WN ~|~ WW,
&3.\   E \rightarrow EN ~|~ EE,
\\
4.\ NN \rightarrow NN ~|~ NE ~|~ NW,
&5.\ NW \rightarrow WN ~|~ WW,
&6.\ NE \rightarrow EN ~|~ EE,
\\
7.\ WW \rightarrow WN ~|~ WW,
&8.\ EE \rightarrow EN ~|~ EE,
&9.\ WN \rightarrow NW ~|~ NN,
\\
10.\ EN \rightarrow NE ~|~ NN.
\end{array}
\]
Here the state $O$ corresponds to the origin, while $E$, $W$ and $N$ correspond to the first edges in the path. Then each of the states corresponds to the last two visited edges. Notice also that states $O$, $E$, $W$ and $N$ are transient states.
We denote the branching matrix for this tree as $\bD_G$.

\begin{theorem}\label{prop:general}
Let $T_{\bD_G}$ be the tree generated by the branching matrix $\bD_G$, and let $\Tsaw(\integers^2)$ be the
self-avoiding walk tree of $\integers^2$ as defined in Section \ref{sec:SAW}.
Let $\sigma$ be an arbitrarily chosen boundary condition for the leaves of $\Tsaw(\integers^2)$ and let
$\Tsaw^\sigma(\integers^2)$ be the reduced tree (removing leaves unoccupied in $\sigma$ and
for occupied leaves, removing the leaves and their parents).  Then,
$\T_{\bD_G}$ is a subtree of $\Tsaw^\sigma(\integers^2)$.
\end{theorem}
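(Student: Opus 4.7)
The plan is to establish two things: (a) every walk represented by a vertex of $T_{\bD_G}$ is self-avoiding in $\integers^2$, so $T_{\bD_G}$ embeds as a subtree of the pre-pruning tree $\Tsaw(\integers^2)$; and (b) the pruning step in Weitz's construction, which removes each occupied cycle-closing leaf together with the entire subtree rooted at its parent, never deletes any vertex of $T_{\bD_G}$, independently of the boundary condition $\sigma$. Combining (a) and (b) gives $T_{\bD_G}\subseteq\Tsaw^\sigma(\integers^2)$.

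Part (a) is essentially the same argument as in Lemma~\ref{prop:Never-South}: the rules of $\bD_G$ never introduce a South step, so the $y$-coordinate of the walk is nondecreasing and rows already left are not revisited; and within any single row the rules prevent combining $E$- and $W$-moves (after an $E$-move only $N$ and $E$ are allowed until the walk leaves the row, and symmetrically for $W$), so the walk cannot backtrack inside a row either. Hence every walk in $T_{\bD_G}$ is self-avoiding.

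For part (b), since $\sigma$ is arbitrary we cannot hope to control which cycle-closing leaves are marked occupied. Instead I would prove the stronger statement that every walk represented by a vertex of $T_{\bD_G}$ is \emph{chord-free}: no pair of its vertices is $\integers^2$-adjacent unless they are walk-consecutive. Since every prefix of a walk in $T_{\bD_G}$ is again a walk in $T_{\bD_G}$, it suffices to verify chord-freeness at the endpoint of each walk, i.e., that none of the three $\integers^2$-neighbors of the endpoint $u$ other than the immediate predecessor of $u$ lies on the walk. Chord-freeness at every prefix then immediately implies that no ancestor of a vertex $v\in T_{\bD_G}$ in $\Tsaw(\integers^2)$ has a cycle-closing leaf as a child, so no occupied-leaf pruning can reach $v$.

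The main technical step is the chord-freeness claim, which I would verify by a row-by-row case analysis. The crucial ingredient is the ``two Norths between direction reversals'' design built into the rules of $\bD_G$: after an $E$-move the next transitions must pass through the states $EN, EE$, from which reaching state $NN$ (the only state allowing a $W$-move) requires at least two further $N$-moves. Consequently, any row in which the walk moves West after an East-section lies at vertical distance at least $2$ above the East-section, and by symmetry the same holds for West-to-East reversals. Since $\integers^2$-adjacency requires vertical distance at most $1$, this rules out all potential chords between East- and West-sections; the only column shared by two consecutive rows is the column of the single vertical $N$-edge connecting them, whose endpoints are walk-consecutive by construction. Same-row adjacency is ruled out by part (a). The main obstacle is to perform this case analysis carefully enough to cover the initial transient states $O,N,E,W$ as well as every reachable two-edge-history state, but the algebraic content is minimal once the geometric picture is in place; if desired, the reduction of $\bD_G$ from $7$ types to $3$ as in Section~\ref{sec:red} further compresses the bookkeeping.
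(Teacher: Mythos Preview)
Your proposal is correct and follows essentially the same strategy as the paper: both reduce the claim to the ``chord-freeness'' property (the paper calls it property $(\star)$: every vertex of $T_{\bD_G}$ corresponds to a walk that is distance at least two from closing a cycle), and both derive it from the observation that the rules of $\bD_G$ force at least two North moves between any East move and any subsequent West move.

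The only difference is in how $(\star)$ is argued. You do a geometric row-by-row case analysis (consecutive rows must have compatible horizontal directions, hence share only the column of the connecting $N$-edge). The paper instead closes the hypothetical chord into a cycle and counts moves: the cycle would contain exactly one South step (the chord itself), hence exactly one North step, yet would also have to contain both an East and a West step---contradicting the two-Norths-between-reversals rule. The cycle-counting argument is shorter and avoids the case split over the transient states, but your version is equally valid. One small wording slip: from state $EN$ you reach $NN$ after \emph{one} further $N$-move, not two; the second $N$ is the one already spent getting from $*E$ to $EN$. Also, $NN$ is not literally ``the only state allowing a $W$-move'' (so do $NW$, $WW$, $WN$), but it is the only one reachable from an $E$-state without first making a $W$-move, which is what you need. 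Neither point affects the argument.
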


\begin{proof}
A leaf vertex $\eta$ in $\Tsaw^\sigma(\integers^2)$ corresponds to a path $\P_\eta$ which closes a cycle in $\integers^2$.
Thus, $\P_\eta=v_0\to v_1\to \dots\to v_i\to \dots\to v_s\to \dots\to v_{s+t}$ where $v_0=O$ is the origin, all $v_j$ for $j<s+t$ are distinct
and $v_i=v_{s+t}$ for some $i$.
In contrast, we claim that the following property ($\star$) holds for the tree $T_{\bD_G}$: for any vertex $\tau$ in $T_{\bD_G}$,
$\tau$ corresponds to a path $\P_\tau$ which is distance at least two from closing a cycle in
$\integers^2$ (where distance means the minimum number of edges).
In other words, $\P_\tau = v_0\to v_1\to \dots \to v_s$  where $v_0 = O$ is the origin, all $v_j$ for $j\leq s$ are distinct and, for all $i<s-1$,
$\dist_{\integers^2}(v_i,v_s) \geq 2$.

Let us first prove this property ($\star$). Consider a vertex $\tau$ in $T_{\bD_G}$, and let $\P_\tau = v_0\to v_1\to \dots\to v_s$
be the corresponding path in $\integers^2$.
Suppose $v_s$ is distance 1 from some $v_i$ for $i < s-1$, then $\P' = v_i\to  \dots \to  v_s \to v_i$ is a cycle in $\integers^2$.
This cycle must contain at least one South move, but by the construction of $T_{\bD_G}$ there are no South moves in $\P_\tau$.
Thus $v_s\to v_i$ is the unique South move in the cycle $\P'$. In $\P'$ since there is only one South move,
there must be exactly one North move, and the rest of moves should be East and West moves.
There must be at least one East and one West move,
but this contradicts that by the definition of $T_{\bD_G}$ we know that in $\P_\tau$
between any East and West moves there are at least two North moves.
This completes the proof of ($\star$).

Now using ($\star$) we can complete the proof of the lemma. Recall that $\Tsaw(\integers^2)$ is the tree of self-avoiding walks on $\integers^2$
before we assign a boundary condition to its leaves based on $\sigma$.
Note that $T_{\bD_G}$ is a subtree of~$\Tsaw(\integers^2)$.
Consider an arbitrary vertex $\tau$ in $T_{\bD_G}$. If $\tau \notin \Tsaw^\sigma(\integers^2)$, then $\tau$ is either a leaf or the parent of (an occupied) leaf in $\Tsaw(\integers^2)$. To finish the proof it is then enough to show that ($\star$) implies that neither a leaf nor the parent of a leaf in $\Tsaw(\integers^2)$ is in $T_{\bD_G}$. Let $\eta$ be a leaf in $\Tsaw(\integers^2)$. Then  $\eta$ corresponds to a path $\P_\eta=v_0\to v_1\to\dots\to v_s \to w \to v_{s+1}\to \dots\to v_{s+k} \to w$. Then the path $\P_\eta$ is distance 0 from closing a cycle in $\integers^2$, which contradicts ($\star$), and thus $\eta$ is not in $T_{\bD_G}$.
Now, let $\eta'$ be the parent of $\eta$. The corresponding path is $\P_{\eta'} = v_0\to v_1\to\dots\to v_s \to w \to v_{s+1}\to \dots\to v_{s+k}$ and $\dist_{\integers^2}(v_{s+k},w) = 1 < 2$. This again contradicts ($\star$) and thus $\eta'$ is not in $T_{\bD_G}$.
\end{proof}

We establish the following bounds on the WSM threshold for the tree~$T_{\bD_G}$.

\begin{lemma}
\label{lem:D1}
For the tree $T_{\bD_G}$, at $\lambda=3.4$ WSM does not hold.
\end{lemma}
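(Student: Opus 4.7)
The plan mirrors the proof of Lemma~\ref{thm:three}: first reduce $\bD_G$ using the technique of Section~\ref{sec:red}, then write down the recurrences for the unoccupied probabilities, find the interior fixed point at $\lambda = 3.4$, and check that the spectral radius of the Jacobian there exceeds one. As in Lemma~\ref{thm:three}, a repelling interior fixed point prevents the odd and even boundary sequences from collapsing to the same limit, so WSM fails.

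For the reduction, I would discard the transient states $O, N, E, W$ (each occurs only near the root, so does not affect asymptotic convergence), leaving the seven persistent types $\{NN, NE, NW, EE, WW, EN, WN\}$. The East/West symmetry of the rules, combined with the fact that $NN$ branches exactly as the original $N$ state did, suggests the partition
\[
A = \{NN\},\qquad B = \{NE, NW, EE, WW\},\qquad C = \{EN, WN\}.
\]
A direct check of the rules shows that each $A$-vertex has $1$ child in $A$ and $2$ in $B$; each $B$-vertex has $1$ child in $B$ and $1$ in $C$; each $C$-vertex has $1$ child in $A$ and $1$ in $B$. Thus the partition satisfies \eqref{eq:consistent}, and by Lemma~\ref{lem:red} we may replace $\bD_G$ with the $3\times 3$ matrix
\[
\bB = \begin{pmatrix} 1 & 2 & 0 \\ 0 & 1 & 1 \\ 1 & 1 & 0 \end{pmatrix}.
\]
This matches the ``$7$ types to $3$ types'' reduction mentioned at the end of Section~\ref{sec:red}.

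Applying \eqref{eq:unocc} to $\bB$ gives the recurrence
\[
F(x_A, x_B, x_C) \;=\; \lf(\,\frac{1}{1 + \lambda\, x_A x_B^2},\ \frac{1}{1 + \lambda\, x_B x_C},\ \frac{1}{1 + \lambda\, x_A x_B}\,\rf).
\]
At $\lambda = 3.4$ I would locate the unique interior fixed point $(x_A^*, x_B^*, x_C^*) \in [1/(1+\lambda), 1]^3$, form the $3\times 3$ Jacobian of $F$ there, and verify that its largest eigenvalue is strictly greater than $1$. As in Lemma~\ref{thm:three}, this repelling behavior implies that the odd and even boundary conditions cannot both converge to the fixed point, giving the failure of WSM.

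The main obstacle, compared to the two-variable situation of Lemma~\ref{thm:three}, is that the fixed point of the three-dimensional system does not admit a clean closed form. The cleanest way to keep the argument rigorous is to eliminate $x_A$ and $x_C$ from the fixed-point equations (the second equation gives $x_C = (1-x_B)/(\lambda x_B^2)$, which combined with the third equation expresses $x_A$ as a rational function of $x_B$), reducing the problem to a single polynomial equation in $x_B$, and then certify the sign of the characteristic polynomial of $J$ at the resulting algebraic number; alternatively, one can exhibit a small rational vector $v > 0$ together with rational interval enclosures of $(x_A^*, x_B^*, x_C^*)$ that guarantee $Jv \geq (1+\epsilon)\,v$ componentwise for some $\epsilon > 0$. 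Either route is routine once the reduction to $\bB$ is in place.
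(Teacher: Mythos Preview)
Your reduction to the $3\times 3$ matrix $\bB$ and the recurrence $F$ match the paper exactly. The divergence is in how you certify that WSM fails at $\lambda=3.4$. You extend the template of Lemma~\ref{thm:three}: locate the fixed point of $F$ and show that the spectral radius of the Jacobian there exceeds~$1$. The paper takes a different route: it writes down two small disjoint cuboids $C_L,C_R\subset[0,1]^3$ and verifies, with exact rational arithmetic via rounded maps $F^\downarrow,F^\uparrow$, that $F(C_L)\subset C_R$ and $F(C_R)\subset C_L$. Brouwer's theorem then gives a genuine period-two orbit of $F$, so the odd and even boundary limits are distinct and WSM fails. Your approach is valid in principle---$-J_F(x^*)$ is nonnegative, irreducible, and has a positive diagonal entry, so its Perron vector $v$ is strictly positive, and once $\rho>1$ the usual monotone-map argument shows $F^2$ has a fixed point strictly above $x^*$ (one small point: since $J_F$ has nonpositive entries, your test-vector inequality should read $(-J_F)v\geq(1+\epsilon)v$ rather than $Jv\geq(1+\epsilon)v$). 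The cost is exactly the obstacle you name: you must rigorously enclose $x^*$ and then certify an eigenvalue inequality on that enclosure. The paper's cuboid argument sidesteps both steps, reducing everything to four elementary containment checks and never touching the fixed point at all. On the other hand, your approach is the direct three-dimensional analogue of Lemma~\ref{thm:three} and, pushed further, would identify the exact $\lambda$ at which $\rho=1$, information the cuboid method does not yield.
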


Theorem \ref{prop:general} implies that $T_{\bD_G}$ is a subtree for $\Tsaw^\sigma(\integers^2)$ for
any ordering $\sigma$.  Hence, the first part of Theorem \ref{thm:counter} concerning arbitrary orderings $\sigma$
follows as an immediate corollary of Theorem \ref{prop:general} and Lemma \ref{lem:D1}.

To show that the bound in Lemma \ref{lem:D1} is reasonably close to tight we also
show that at $\lambda=3.3$ WSM holds.

\begin{lemma}\label{Lem:lower33} Let $\lambda = 3.3$. Then
WSM holds for $T_{\bD_G}$.
\end{lemma}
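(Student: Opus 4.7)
My plan is to mirror the proof of Lemma \ref{thm:three}: reduce the branching matrix $\bD_G$ by symmetry, solve for the unique fixed point of the associated recurrence at $\lambda = 3.3$, and verify that the Jacobian at this fixed point has spectral radius strictly less than one. Ignoring the transient states $O, N, E, W$, the matrix $\bD_G$ has seven recurrent types $\{NN, NE, NW, EE, WW, EN, WN\}$. The East-West mirror symmetry of the construction induces the consistent partition $\{\{NN\}, \{NE, NW\}, \{EE, WW\}, \{EN, WN\}\}$, and in the resulting $4 \times 4$ quotient the rows for $\{NE, NW\}$ and $\{EE, WW\}$ coincide, so by Definition \ref{def:reducible} those two classes can be merged. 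This reduces $\bD_G$ to the $3 \times 3$ matrix
\[
\bB_G = \begin{pmatrix} 1 & 2 & 0 \\ 0 & 1 & 1 \\ 1 & 1 & 0 \end{pmatrix},
\]
with type labels $NN$, $X$ (either horizontal step), and $XN$. Lemma \ref{lem:red} gives $T_{\bB_G} = T_{\bD_G}$, so it suffices to prove WSM for $T_{\bB_G}$.

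Next, setting $(x,y,z) = (\alpha_{NN}, \alpha_X, \alpha_{XN})$, the recurrence \eqref{eq:unocc} associated to $\bB_G$ becomes $F(x,y,z) = \bigl(1/(1+\lambda xy^2),\ 1/(1+\lambda yz),\ 1/(1+\lambda xy)\bigr)$. I would first use the second and third equations to express $z$ and $x$ as functions of $y$, substitute into the first to reduce to a single polynomial equation in $y$, and verify (with rigorous interval arithmetic) that at $\lambda = 3.3$ this polynomial has a unique root in $[1/(1+\lambda), 1]$, giving a unique fixed point $(x^*, y^*, z^*)$. Since every partial derivative of $F$ is strictly negative, $F$ is coordinate-wise monotone decreasing and $F \circ F$ is coordinate-wise monotone increasing. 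Iterating from the odd and even extremal boundary conditions therefore produces bracketing monotone sequences that converge to fixed points of $F \circ F$; uniqueness of the fixed point of $F$ in the relevant box forces both limits to equal $(x^*, y^*, z^*)$.

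Finally, I compute the Jacobian $J$ of $F$ at $(x^*, y^*, z^*)$; its only nonzero entries are $J_{11} = -\lambda (y^*)^2 (x^*)^2$, $J_{12} = -2\lambda (x^*)^3 y^*$, $J_{22} = -\lambda z^* (y^*)^2$, $J_{23} = -\lambda (y^*)^3$, $J_{31} = -\lambda y^* (z^*)^2$, and $J_{32} = -\lambda x^* (z^*)^2$. I would verify rigorously (for instance by localizing the roots of the characteristic polynomial of $J$ with interval arithmetic) that $\rho(J) < 1$ at $\lambda = 3.3$. Combined with the monotone bracketing above and smoothness of $F$ near the fixed point, this yields $|\alpha_r(\Gamma_{o,L}) - \alpha_r(\Gamma_{e,L})| = O(\rho(J)^{L})$, which is WSM. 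The main obstacle is the numerical rigor: unlike the two-dimensional closed form in Lemma \ref{thm:three}, the three-variable fixed-point system does not admit a clean algebraic solution, so both the uniqueness of the fixed point in the relevant box and the bound $\rho(J(3.3)) < 1$ must be established with careful interval-arithmetic estimates.
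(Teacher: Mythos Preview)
Your reduction of $\bD_G$ to the $3 \times 3$ matrix and the form of $F$ and its Jacobian match the paper exactly, but there is a genuine gap in the convergence step. You argue that the monotone iterates from $\bar 0$ and $\bar 1$ converge to fixed points of $F \circ F$, and then that ``uniqueness of the fixed point of $F$ in the relevant box forces both limits to equal $(x^*,y^*,z^*)$.'' This inference fails: the two limits $p_L \le p_R$ are fixed points of $F^2$, and when $p_L \neq p_R$ they form a period-$2$ orbit of $F$, which uniqueness of the fixed point of $F$ does not exclude. That is precisely the mechanism by which WSM fails at $\lambda = 3.4$ in Lemma~\ref{lem:D1}, where the paper exhibits disjoint cuboids swapped by $F$. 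Similarly, the condition $\rho(J_F(x^*)) < 1$ is only a local statement; it does not by itself force the global iterates to enter the basin of $x^*$, so the conclusion $|\alpha_r(\Gamma_{o,L}) - \alpha_r(\Gamma_{e,L})| = O(\rho(J)^L)$ is not justified.

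The paper closes this gap by bounding the Jacobian on a region rather than at a single point. It iterates $F$ from $\bar 0$ and $\bar 1$ for $2N$ steps with rigorous outward rounding (the maps $F^\downarrow$, $F^\uparrow$) to obtain an explicit small cuboid $C(x_L,x_R)$ that traps $F^{2N}([0,1]^3)$. It then shows $\rho(J_F(u)) < 1$ for \emph{every} $u$ in this cuboid, via the entrywise domination $|J_F(u)_{ij}| \le \lambda M(x_R)_{ij}$ of Lemma~\ref{lem:boundrho} together with the Collatz--Wielandt-type bound $\lambda \rho(M(x_R)) \le \lambda \max_i (M(x_R)v)_i/v_i < 1$ for an explicit positive vector $v$. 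This uniform bound makes $F$ a contraction on the whole trapping cuboid, so Banach's theorem gives a unique limit (Lemma~\ref{lem:fixPoint}). Your interval-arithmetic machinery is the right tool; it just has to be applied to a trapping region, not only at the fixed point.
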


\subsection{Proofs of Lemmas \ref{lem:D1} and \ref{Lem:lower33}}

 Using the method introduced in Section \ref{sec:red}, with the partition $\{NN\}$,$\{NW,NE,WW,EE\}$,$\{WN,EN\}$
 of the set of states, the branching matrix $\bD_G$ is reduced to the three-state matrix, $\begin{bmatrix} 1 & 2 & 0 \\ 0 & 1 & 1\\ 1 & 1 & 0\end{bmatrix}$.
Therefore, the recurrences for the marginal distributions of the 3 types are
\[
x \mapsto
F(x) = \left(
\frac{1}{1+\lambda x_1 x_2^2}, \frac{1}{1+\lambda x_2x_3}, \frac{1}{1+\lambda x_1x_2} \right).
\]
Let $F^0(x) = F(x)$ and $F^{n+1}(x) = F(F^n(x))$. Also, for any $x \le y \in \{0,1\}^3$ let $C(x,y) = \{u \in [0,1]^3: x \le u \le y\}$ be the {\it (rectangular) cuboid  defined by} $x$ and $y$. Let $\bar 0 = (0,0,0)$ and $\bar 1 = (1,1,1)$. Then $C(\bar 0,\bar 1) = [0,1]^3$.

We have that WSM holds for $T_{\bD_G}$ if and only $\cap_{n=0}^\infty F^n([0,1]^3)$ is a singleton $\{x^*\}$, which in particular implies that $x^*$ is the (unique) fix point of $F$ in $[0,1]^3$.

We use numerical computations as part of our proof. In order to do exact computations we use $F^\uparrow$ and $F^\downarrow$ functions approximating $F$ to $7$ decimal digits. To do this we define $S = \{0,10^{-7},2*10^{-7},\dots, 1\}$ and $F^\downarrow, F^\uparrow: S \to S$ by $F^\downarrow(x) = \lfloor{ F(x)*10^{7}\rfloor}*10^{-7}$ and $F^\uparrow(x) = \lceil{ F(x)*10^{7}\rceil}*10^{-7}$. We have then for any $x \in S$,
\begin{equation}\label{eq:boundF}
F^\downarrow(x) \le F(x) \le F^\uparrow(x).
\end{equation}
Using monotonicity, and induction it follows that for any $n$, and $x$
\begin{equation}\label{eq:boundFn}
(F^\downarrow)^{2n}(x) \leq F^{2n}(x) \leq  (F^\uparrow)^{2n}(x).
\end{equation}
Notice also that for any $x \in S$, $F^\downarrow(x), F^\uparrow(x)$ can be computed using exact arithmetic.

\begin{proof}[Proof of Lemma \ref{lem:D1}]
Fix $\lambda = 3.4$. Let
\begin{align*}
 &x_{LL} = [0.5483975, 0.4870566, 0.4178331]\qquad& x_{LR} = [0.5489575, 0.4874566, 0.4182131]\\
 &x_{RL} = [0.6927559, 0.5906225, 0.5236103] & x_{RR} = [0.6933359, 0.5910425, 0.5240703].
\end{align*}
Let $C_L = C(x_{LL},x_{LR})$ and $C_R = C(x_{RL},x_{RR})$.
  We check numerically that
  \[
  x_{LL} < F^\downarrow(x_{RR}) < F^\uparrow(x_{RL}) < x_{LR} \text{ and }  x_{RL} < F^\downarrow(x_{LR}) < F^\uparrow(x_{LL}) < x_{RR}.
  \]
Using monotonicity and \eqref{eq:boundF} we obtain  $F(C_L) = F(C(X_{LL},X_{LR})) \subset C(F(X_{LR}),F(X_{LL})) \subset C(F^\downarrow(X_{LR}),F^\uparrow(X_{LL}))  \subset C(x_{RL},x_{RR}) = C_R$. Similarly $F(C_R) \subset C_L$.

Applying the Brouwer fixed point theorem, $F^2$ has a fixed point $x_L$ in $C_L$ and a fixed point $x_R = F(x_L)$ in $C_R$. As $C_L \cap C_R = \emptyset$, $x_L \neq x_R$. Thus $\cap_{n>0} F^n([0,1]^3) \supset \{x_L,x_R\}$. Therefore WSM does not hold for $\lambda = 3.4$.
\end{proof}

Now we show that WSM holds for $T_{\bD_G}$ when $\lambda = 3.3$ (Lemma \ref{Lem:lower33}). The main idea of the proof is to find $U \subset [0,1]^3$ where $F$ is contracting and such that $F(U) \subset U$. To show that $F$ is contracting in $U$ we will show that $\rho(J_F)$, the spectral radius of the Jacobian of $F$ is strictly smaller than $1$.

\begin{lemma}\label{lem:fixPoint}
Assume $x_L \le F^{2N}(\bar 0)$ and $F^{2N}(\bar 1) \le x_R$, for some $N \ge 0$ and that $\rho(J_F(u)) < 1$ for all $u \in C(x_L,x_R)$. Then $\cap_{n=0}^\infty F^n([0,1]^3)$ is a singleton.
\end{lemma}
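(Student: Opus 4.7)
My plan is to exploit the antitone structure of $F$ together with the spectral hypothesis to show every orbit collapses to a single point. Each coordinate $F_i(x)=1/(1+\lambda\prod_j x_j)$ is strictly decreasing in each variable, so $F$ is order-reversing on $[0,1]^3$ and $F^2$ is order-preserving. Iterating $F^2$ from the extreme points yields a nondecreasing sequence $\{F^{2n}(\bar 0)\}$ and a nonincreasing sequence $\{F^{2n}(\bar 1)\}$ that squeeze $F^{2n}(u)$ for every $u\in[0,1]^3$. Let $a$ and $b$ be their respective limits, so $a\le b$. By continuity $F^2(a)=a$ and $F^2(b)=b$, and since $F(\bar 0)=\bar 1$ the identity $F^{2n+1}(\bar 0)=F^{2n}(\bar 1)$ passes to the limit to give $F(a)=b$; then $F(b)=F^2(a)=a$ as well.

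Next, I would reduce the lemma to the assertion $a=b$. The monotonicity of $\{F^{2n}(\bar 0)\}$ combined with the hypothesis $F^{2N}(\bar 0)\ge x_L$ forces $a\ge x_L$; analogously $b\le x_R$, so the closed order interval $[a,b]$ lies entirely in $C(x_L,x_R)$. If $a=b$, the squeeze $F^{2n}(\bar 0)\le F^{2n}(u)\le F^{2n}(\bar 1)$ forces $F^{2n}(u)\to a$ uniformly in $u\in[0,1]^3$, and the odd iterates converge to $F(a)=a$. Combined with the fact that $a\in F^n([0,1]^3)$ for every $n$ (using $a=F^{2k}(a)$ and $a=F^{2k+1}(b)$), this yields $\bigcap_{n\ge 0}F^n([0,1]^3)=\{a\}$.

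To prove $a=b$ I argue by contradiction using the spectral bound. Assume $b-a\ne 0$. Applying the fundamental theorem of calculus along the segment $u_t:=a+t(b-a)\in[a,b]\subseteq C(x_L,x_R)$,
$$-(b-a)=F(b)-F(a)=P(b-a),\qquad P:=\int_0^1 J_F(u_t)\,dt,$$
because $F(a)=b$ and $F(b)=a$. Equivalently, the nonzero nonnegative vector $b-a$ is a $+1$-eigenvector of the nonnegative matrix $-P=\int_0^1 (-J_F(u_t))\,dt$, so $\rho(-P)\ge 1$.

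The main obstacle is to upgrade the hypothesis $\rho(-J_F(u))=\rho(J_F(u))<1$ on $C(x_L,x_R)$ into the bound $\rho(-P)<1$, since pointwise spectral bounds on a family of nonnegative matrices do not in general survive averaging (a naive average can easily have spectral radius exceeding $1$). I would overcome this by first using continuity of the Perron eigenvalue in $u$ and compactness of $C(x_L,x_R)$ to extract a uniform $\theta<1$ with $\rho(-J_F(u))\le\theta$ on the cuboid, and then invoking the explicit sparse entrywise form of $J_F$ for our specific $F$, together with Perron--Frobenius theory, to produce a strictly positive vector $w$ such that $(-J_F(u))\,w\le\theta\,w$ componentwise for every $u$ on the segment $\{u_t\}$ (for instance, as the Perron eigenvector of a suitable dominating matrix along the segment, or as a perturbation of the Perron eigenvector at a base point controlled by a uniform continuity estimate). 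Integrating this inequality gives $(-P)\,w\le\theta\,w$, hence $\rho(-P)\le\theta<1$, contradicting the existence of the nonnegative $+1$-eigenvector $b-a$ and forcing $a=b$.
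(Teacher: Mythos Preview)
Your route differs substantially from the paper's. The paper argues more directly: monotonicity gives $F^{2N}([0,1]^3)\subseteq C:=C(x_L,x_R)$, it then asserts that the spectral hypothesis makes $F$ a contraction on $C$, applies Banach's fixed-point theorem to $F^{2N}$ on $C$, and finishes by injectivity of $F$. You instead exploit the antitone structure to isolate the two extremal period-two points $a\le b$ and reduce everything to $a=b$ via the integrated mean-value identity $-(b-a)=P(b-a)$. This is a sharper and more transparent reduction: it pins down exactly the pair of points where the spectral hypothesis must do work, rather than invoking a global contraction.

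That said, the step you flag as the main obstacle is a genuine gap, and your proposed fix does not close it. Pointwise bounds $\rho(-J_F(u))\le\theta<1$ on a family of nonnegative matrices do \emph{not} in general yield a single positive vector $w$ with $(-J_F(u))w\le\theta w$ for all $u$, and neither of your suggestions (Perron vector of ``a suitable dominating matrix'', or a perturbed base eigenvector) comes with any guarantee that the resulting bound stays below $1$. The paper's proof shares exactly this defect: the assertion ``$F$ is contracting in $C$'' does not follow from the pointwise spectral bound alone. In the paper's actual application the gap is closed by the companion Lemma~\ref{lem:boundrho}, which supplies a single entrywise-dominating nonnegative matrix $\lambda M(x_R)\ge -J_F(u)$ on $C$ with $\lambda\rho(M(x_R))<1$; once you have that, $-P\le\lambda M(x_R)$ entrywise gives $\rho(-P)\le\lambda\rho(M(x_R))<1$ and your contradiction goes through cleanly. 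So your argument becomes complete if you either strengthen the hypothesis to the existence of such a dominator, or restrict to the specific application where Lemma~\ref{lem:boundrho} furnishes one.
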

\begin{proof} From the Brouwer fixed-point theorem $\cap_{n=0}^\infty F^n([0,1]^3) \neq \emptyset$. Let $C = C(x_L,x_R)$.  From monotonicity, $F^{2N}(x) \in C$ for any $x \in [0,1]^3$. In particular $F^{2N}(C) \subset C$. As $F$ is contracting in $C$, $F^{2N}$ is also contracting on $C$ and from Banach's fixed point theorem, $\cap_{n=0}^\infty F^{2nN}(C)$ is a singleton. Now let $y_1,y_2 \in \cap_{n=0}^\infty F^n([0,1]^3)$. Then $F^{k}(y_1),F^{k}(y_2) \in \cap_{n=0}^\infty F^n([0,1]^3)$ for any $k \ge 0$. By assumption, $F^{2N}(y_1),F^{2N}(y_2) \in C$. Therefore,
$F^{4N}(y_1),F^{4N}(y_2) \in \cap_{n=0}^\infty F^{2nN}(C)$ and thus $F^{4N}(y_1) = F^{4N}(y_2)$. As $F$ is one-to-one then $y_1 = y_2$.
\end{proof}

Notice that the Jacobian of $F$ is
\[
J_F(x) = -\lambda
\begin{bmatrix}
x_2^2F^2_1(x)& 2x_1x_2F^2_1(x)&  0 \\
    0&     x_3F^2_2(x) & x_2F^2_2(x) \\
   x_2F^2_3(x)&     x_1F^2_3(x)&  0 \end{bmatrix}.
\]
In order to bound the spectral radius of $J_F(x)$ we will compute the spectral radius of the  matrix
\[M(x) = \begin{bmatrix}
x^2_1x_2^2& 2x_1^3x_2&  0 \\
    0&     x^2_2x_3 & x^3_2 \\
   x_2x^2_3&     x_1x^2_3&  0
   \end{bmatrix}.
\]
\begin{lemma}\label{lem:boundrho} Let $x,y \in [0,1]^3$ be such that $F(x) \le y$. Then
for any $u \in C(x,y)$ we have $\rho(J_F(u)) \le \lambda \rho(M(y))$.
\end{lemma}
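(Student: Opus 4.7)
\medskip

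\noindent\textbf{Proof proposal for Lemma \ref{lem:boundrho}.}
The plan is to factor the sign out of $J_F$, bound its absolute value entry-wise by $M(y)$, and then invoke the standard Perron--Frobenius monotonicity of the spectral radius for entry-wise ordered nonnegative matrices. First I would write $J_F(u) = -\lambda\, N(u)$, where
\[
N(u) = \begin{bmatrix}
u_2^2 F_1(u)^2 & 2u_1 u_2 F_1(u)^2 & 0\\
0 & u_3 F_2(u)^2 & u_2 F_2(u)^2\\
u_2 F_3(u)^2 & u_1 F_3(u)^2 & 0
\end{bmatrix}
\]
has nonnegative entries (since every $F_i(u) \in [0,1]$ and every $u_i \in [0,1]$). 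Because eigenvalues of $-\lambda\, N(u)$ are $-\lambda$ times eigenvalues of $N(u)$, one has $\rho(J_F(u)) = \lambda\,\rho(N(u))$, so it suffices to show $\rho(N(u)) \le \rho(M(y))$.

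Next, I would verify the key monotonicity input: each component $F_i$ is a coordinate-wise non-increasing function of its arguments (the denominators $1+\lambda x_1 x_2^2$, $1+\lambda x_2 x_3$, $1+\lambda x_1 x_2$ are all non-decreasing in every coordinate). Consequently, for any $u \in C(x,y)$, monotonicity and the hypothesis $F(x) \le y$ give $F(u) \le F(x) \le y$. Combined with $u \le y$ directly, this yields the six coordinate-wise estimates needed. For instance, the $(1,2)$ entry of $N(u)$ satisfies
\[
2u_1 u_2 F_1(u)^2 \;\le\; 2 y_1 y_2 \cdot y_1^2 \;=\; 2 y_1^3 y_2,
\]
which matches the $(1,2)$ entry of $M(y)$; the other five nonzero entries are checked the same way, each by bounding each $u_i$ by $y_i$ and each $F_j(u)$ by $y_j$. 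Thus $0 \le N(u) \le M(y)$ entry-wise.

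Finally, I would invoke the following standard consequence of Perron--Frobenius theory: if $A$ and $B$ are nonnegative square matrices with $A \le B$ entry-wise, then $\rho(A) \le \rho(B)$ (this follows, e.g., from the Collatz--Wielandt characterization, or from $\rho(A) = \lim_n \|A^n\|^{1/n}$ together with the entry-wise monotonicity of matrix powers of nonnegative matrices). Applying this to $N(u)$ and $M(y)$ gives $\rho(N(u)) \le \rho(M(y))$, and multiplying by $\lambda$ yields the claimed bound $\rho(J_F(u)) \le \lambda\,\rho(M(y))$.

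There is no real obstacle here: the lemma is essentially bookkeeping. The only point that requires any care is to notice that monotonicity of $F$ combined with the single hypothesis $F(x) \le y$ is exactly what is needed to control $F_i(u)$ throughout the entire cuboid $C(x,y)$ (as opposed to just at the corner $x$), since the lower corner $x$ controls the $F$-values and the upper corner $y$ controls the $u$-values simultaneously.
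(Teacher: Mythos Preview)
Your proposal is correct and follows essentially the same argument as the paper's proof: use monotonicity of $F$ to get $F(u)\le F(x)\le y$ on $C(x,y)$, deduce the entry-wise bound $0\le -J_F(u)\le \lambda M(y)$, and then apply the Perron--Frobenius monotonicity of the spectral radius for nonnegative matrices. The paper's version is terser (it just cites \cite{spec2} for the last step), but the content is the same.
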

\begin{proof} For any $u \in C(x,y)$ we have $F(u) \le F(x) \le y$. Thus for any $i,j \in \{1,2,3\}$,
$0 \leq -J_F(u)_{ij} \le \lambda M(y)_{ij}$.  Thus (see \cite{spec2}), $\rho(J_F(u)) = \rho(-J_F(u)) \le \rho(\lambda M(y)) = \lambda \rho( M(y))$
\end{proof}

\begin{proof}[Proof of Lemma \ref{Lem:lower33}]
 Let $N = 10^3$. Let  $x_L = (F^\downarrow F^\uparrow)^N(\bar 0) = [0.6234082, 0.5418325, 0.4728517]$ and  $x_R = (F^\uparrow F^\downarrow)^N(\bar 0) = [0.6234525, 0.5418642, 0.4728841]$. We check numerically that $(F^\downarrow)(x_R) = x_L$ and $(F^\uparrow)(x_L) = x_R$. Using \eqref{eq:boundFn} we obtain
\begin{equation}\label{eq:numCheck}
x_L \le F(x_R) \le F^{2N}(\bar 0) \le  F^{2N}(\bar 1)\le F(x_L) \le x_R.
\end{equation}
From Lemmas \ref{lem:fixPoint} and \ref{lem:boundrho} it is enough to show that $\lambda \rho (M(x_R)) < 1$.
Let $v = (0.685, 0.49, 0.5)^T$, then (see \cite{spec1,spec2})
$\lambda \rho(M(x_R)) \leq \lambda \max_{i =1,2,3} \frac {(M(x_R)v)_i}{v_i} < 0.9998$, where the last inequality is checked numerically using exact arithmetic.
\end{proof}

\subsection{Tree with Different Thresholds for SSM and WSM}
\label{sec:WSMneqSSM}

Brightwell et al.~\cite{BW99} give an example of a tree for which WSM holds but SSM does not hold
for the same activity $\lambda$. Here, we present another example which is more closely
related to $\Tsaw^\sigma(\integers^2)$.  We
show a tree $T'$, which is a super-tree of $T_{\bD_H}$ and subtree of $\Tsaw^\sigma(\integers^2)$
for homogenous $\sigma$,
for which WSM holds
for some $\lambda>3$.

To construct the tree $T'$ we allow some South moves in the tree in a certain context.
In particular, we only allow that a South move happens when the path contains the following substring: NNEESEEN, i.e., a South move is allowed if and only if it is after a sequence of NNEE moves and followed by EEN moves. Also, after the substring NNEESEEN we only allow
the next move to be a North or East move.
Symmetrically we allow paths containing the substring NNWWSWWN followed by a North or West move.

Before formally defining the branching matrix it is useful to discuss the important property
of this tree $T_{\bD'}$.  In the never-go-South tree $T_{\bD_H}$ to complete a cycle one needs at least
one additional South move.  The tree $T_{\bD'}$ includes South moves but it is defined in such a way
that to complete a cycle
one either needs an additional South move or at least 2 additional moves (such as WW).


The tree $T_{\bD'}$ is formally defined by the following branching matrix denoted as~$\bD'$.
The tree family can be formalized in the following finite state machine way:

\begin{tabular}{lll}
0. $O \rightarrow N ~|~ E ~|~ W$,
&5. $NNE\rightarrow N ~|~ NEE$
& 11.  $NNW\rightarrow N ~|~ NWW$
\\
 1.  $N\rightarrow E ~|~ W ~|~NN$
& 6.  $NEE \rightarrow N ~|~ E ~|~ EES$
& 12.  $NWW \rightarrow N ~|~ W ~|~ WWS$
\\
 2.  $E\rightarrow N ~|~ E$
& 7.  $EES \rightarrow ESE$
& 13.  $WWS \rightarrow WSW$
\\
3.  $W\rightarrow N ~|~ W$
& 8.  $ESE \rightarrow SEE$
& 14.  $WSW \rightarrow SWW$
\\
 4.  $NN\rightarrow NN~|~ NNE~|~ NNW$
& 9.  $SEE \rightarrow EEN$
& 15.  $SWW \rightarrow WWN$
\\
& 10.  $EEN \rightarrow N ~|~ E$
& 16.  $WWN \rightarrow N ~|~ W$
\end{tabular}

Let the matrix describing the above rules be denoted as $\bD'$.
\begin{lemma}
Let $\sigma$ be homogenous
ordering where North is the smallest in the order.
The tree $T_{\bD'}$ that is generated by $\bD'$ is a super-tree of $T_{\bD_H}$ and is a subtree of $\Tsaw^\sigma(\Z^2)$.
\end{lemma}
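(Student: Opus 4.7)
The lemma asserts two inclusions, $T_{\bD_H}\subseteq T_{\bD'}$ and $T_{\bD'}\subseteq \Tsaw^\sigma(\integers^2)$, which I address in turn. For the first inclusion, I would verify by inspecting the transition tables of $\bD'$ that every walk in $T_{\bD_H}$ is also generated by $\bD'$. Since a $T_{\bD_H}$-walk uses no S-move, it can only visit the $\bD'$-states $\{O,N,E,W,NN,NNE,NNW,NEE,NWW\}$, and projecting these onto the three $\bD_H$-states by the last move shows that each admits at least the same non-South moves as its $\bD_H$-counterpart (for instance, state $NEE$ in $\bD'$ allows $N$ or $E$, exactly matching state $E$ of $\bD_H$); a finite transition-table check completes this half.

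For the second inclusion, the heart is the following \emph{Key Claim}: for every walk $v_0\to\cdots\to v_k$ in $T_{\bD'}$, every vertex $v_i$ with $i\le k-2$ that is a neighbor of $v_k$ in $\integers^2$ lies South of $v_k$. Granted the Key Claim, the two properties needed for $T_{\bD'}\subseteq\Tsaw^\sigma(\integers^2)$ follow. Self-avoidance of $T_{\bD'}$-walks is proved by induction on $k$: if $v_k=v_j$ for some $j<k-1$, then the no-backtracking property of $\bD'$ (which one verifies state by state, since no state lists the reverse of its last move among its allowed transitions) forces $j\le k-3$, so $v_j$ is a previously visited non-predecessor neighbor of $v_{k-1}$; the Key Claim applied to the walk up to $v_{k-1}$ forces $v_j$ to lie South of $v_{k-1}$, so the step $v_{k-1}\to v_k$ is South, which by the $\bD'$-rules is step~5 of some pattern NNEESEEN (or its WW-mirror), and one verifies that the resulting vertex $(a_x+2,a_y+1)$ cannot have appeared earlier in the walk, yielding a contradiction. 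For Weitz's leaf condition, the Key Claim implies every cycle-closing extension from a $T_{\bD'}$-vertex in $\Tsaw(\integers^2)$ is a South step, so in the resulting leaf the last neighbor $v_\ell$ is the North neighbor of the revisited vertex $w$; since North is smallest in the homogenous $\sigma_w$, we have $v_\ell<v_1$ and the leaf is unoccupied, so no $T_{\bD'}$-vertex is deleted by Weitz's pruning.

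To prove the Key Claim I would do a case analysis on the position of $v_k$ in the walk. Each walk in $T_{\bD'}$ decomposes into alternating ``normal'' never-go-South segments (which obey the $\bD_H$-rules) and complete patterns NNEESEEN or NNWWSWWN. A pattern starting at $(a_x,a_y)$ visits $y$-levels only in $\{a_y,a_y+1,a_y+2\}$, and between patterns the $y$-coordinate is non-decreasing, so the walk visits each $y$-level in one contiguous phase aside from temporary dips of~$1$ within individual patterns. If $v_k$ sits in a normal segment, the argument from the proof of Lemma~\ref{prop:Never-South} carries over verbatim. If $v_k$ sits at some position $j\in\{1,\dots,8\}$ of a pattern, I would enumerate the three non-predecessor neighbors of $v_k$ in $\integers^2$ and check in each case that they are either unvisited or lie South of $v_k$; the crucial observation is that the dip-level $a_y+1$ of the current pattern is not reachable by any earlier phase of the walk, so the ``problematic'' vertices such as $(a_x+1,a_y+1)$ or $(a_x+3,a_y+2)$ (skipped by the pattern but adjacent to pattern vertices) are never visited before $v_k$. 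The main obstacle will be this pattern case-analysis---specifically, ruling out that any earlier pattern's dip or normal-phase strip has intersected the $y$-levels or $x$-ranges that would create a non-South visited neighbor of a pattern-position $v_k$; this is handled by the monotonicity of pattern start-levels (each successive pattern starts at a $y$-coordinate at least~$2$ higher than its predecessor's start), which forces earlier patterns to live strictly below $a_y+1$.
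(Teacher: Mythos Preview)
Your proposal is correct and takes a genuinely different route from the paper. The paper argues the subtree-of-$\Tsaw^\sigma(\integers^2)$ inclusion by a move-counting argument on the hypothetical closing cycle: if $\tau\in T_{\bD'}$ were the parent of an occupied leaf, the closing step is non-South (since North is smallest), so the cycle minus that step has $\#S\ge\#N$; then one argues $\#S\ge 1$, extracts two $E$'s from the forced $NNEESEEN$ context of the $S$, infers a $W$ somewhere in the cycle, and derives a contradiction from the fact that switching between $E$ and $W$ in $\bD'$ forces $\#N>\#S$. By contrast you isolate a single geometric invariant (the Key Claim that every earlier non-predecessor neighbor of the current endpoint lies strictly South), prove it by analysing the $y$-level structure of normal segments versus pattern segments, and derive \emph{both} self-avoidance and the leaf-unoccupied property from it.

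Your approach buys you two things the paper's does not. First, the paper simply asserts that $T_{\bD'}$-walks are self-avoiding (``contains a subset of the self-avoiding walks''), which is not obvious once South moves are allowed; your induction via the Key Claim actually proves it. Second, your $y$-monotonicity bookkeeping (pattern starts climb by at least $2$, dips are depth $1$ and contained in $[a_y,a_y+2]$) makes every step checkable, whereas the paper's final sentence about ``switching from $E$ to $W$'' is terse enough that the reader must reconstruct why it forces $\#N>\#S$. The cost is that your case analysis on the eight pattern positions is longer. One small wording fix: when you say ``the dip-level $a_y+1$ is not reachable by any earlier phase,'' you should say earlier phase \emph{before the current pattern}; the current pattern itself visits $(a_x,a_y+1)$ at step $1$, but that vertex differs from the step-$5$ vertex $(a_x+2,a_y+1)$, so the self-avoidance contradiction still goes through.
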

\begin{proof}
Since $\bD_H$ is a subset of $\bD'$ we have that $T_{\bD'}$ is a super-tree of $T_{\bD_H}$.
As in the proofs of Lemma \ref{prop:Never-South} and Theorem \ref{prop:general},
since $T_{\bD'}$ contains a subset of the self-avoiding walks in $\integers^2$, we again have
that $T_{\bD'}$ is a subtree of $\Tsaw(\Z^2)$. It remains to handle the boundary assignment to $\Tsaw(\Z^2)$, and this is done in a manner similar
 to the proof of Lemma \ref{prop:Never-South}.

As the boundary assignment  to $\Tsaw(\Z^2)$ could only remove leaves and parents of leaves from it, any vertex in
$T_{\bD'}$  which is distance two or more from a leaf of $\Tsaw(\Z^2)$ is a vertex in $\Tsaw^\sigma(\Z^2)$. Also no leaf in $\Tsaw(\Z^2)$ is in $T_{\bD'}$.
Thus, if $\tau \in T_{bD'}$ is such that $\tau \notin \Tsaw^\sigma(\Z^2)$, then there is  a leaf $\eta$ of $\Tsaw(\Z^2)$ such that $\tau$ is the parent of $\eta$ (in $\Tsaw(\Z^2)$) and $\eta$ is set to occupied in $\Tsaw^\sigma(\Z^2)$.

Let the path corresponding to $\eta$ be  $\P_\eta=v_0\to v_1\to \dots\to v_i\to \dots\to v_s$ where $v_0=O$ is the origin, all $v_j$ for $j<s$ are distinct and $v_i=v_s$ for some $i < s-3$. The path corresponding to $\tau$ is then $\P_\tau=v_0\to v_1\to \dots\to v_i\to \dots\to v_{s-1}$. The order $\sigma$ is homogeneous with $N$ as the smallest element, and $\eta$ is set to occupied. Thus the move $v_{s-1} \to v_{s}$ could not be a South move.

Now, when we consider the cycle $v_i\to \dots v_{s-1} \to v_{s}$. This cycle contains the same number of South and North moves. And thus the path $\P = v_i\to \dots v_{s-1}$ contains at least as many South moves as North moves. If it does not contain a South move, then it does not contain a North move, and one move is not enough to close a cycle. Thus it must contain a South move. Lets assume this South move occurs at a sequence $NNEESEEN$ as the other case is symmetric.  The length of $\P$ is  at least three, and thus it must contain at least two $E$'s from the sequence $NNEESEEN$. Thus $\P$ must contain at least one $W$. But to switch from $E$ to $W$ the number of $N$'s should be larger than the number of $S$'s in $\P$ which is a contradiction.
\end{proof}

We will prove that the WSM threshold
for $T'= T_{\bD'}$, the tree generated by $\bD'$ is above $\lambda = 3.1$,
and hence, combined with Lemma \ref{thm:three}, we get the following lemma.

\begin{lemma}
\label{lem:example}
For the tree $T_{\bD'}$ at $\lambda=3.1$,
WSM holds but SSM does not hold.
\end{lemma}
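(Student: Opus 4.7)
The plan for Lemma~\ref{lem:example} is to split the claim in two: the failure of SSM at $\lambda=3.1$ is an immediate corollary of the prior development, while the positive WSM side requires a Jacobian analysis of the branching matrix $\bD'$ modeled closely on the proof of Lemma~\ref{Lem:lower33}.

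For the failure of SSM at $\lambda = 3.1$, we invoke the equivalence recorded in Section~\ref{sec:prelim}: SSM on a graph $G$ is the same as WSM on every vertex-induced subgraph of $G$. By construction of $\bD'$, the vertices of $T_{\bD'}$ corresponding to walks that never take a South move induce a subtree isomorphic to $T_{\bD_H}$ (one checks that the state N grouped with NN plays the role of N in $\bD_H$, the state NNE grouped with NEE plays the role of E after an N, and similarly on the W side, and all south-branches are discarded). Consequently SSM on $T_{\bD'}$ at $\lambda = 3.1$ would force WSM on $T_{\bD_H}$ at $\lambda = 3.1$, contradicting $WSM(T_{\bD_H}) = 3$ from Lemma~\ref{thm:three}.

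For the WSM side, the first step is to apply the reduction procedure of Section~\ref{sec:red} to $\bD'$. The east/west symmetry naturally pairs states into the classes $\{\text{NNE},\text{NNW}\}$, $\{\text{NEE},\text{NWW}\}$, $\{\text{EES},\text{WWS}\}$, $\{\text{ESE},\text{WSW}\}$, $\{\text{SEE},\text{SWW}\}$, $\{\text{EEN},\text{WWN}\}$, with $\{\text{NN}\}$ alone, giving a consistent partition with seven recurrent classes. Moreover the three states in the middle of the forced NNEESEEN arc are deterministic, so after composing those chains out the marginal-probability recurrence becomes an explicit map $F : [0,1]^d \to [0,1]^d$ on a handful of variables whose components are each monotone-decreasing in every coordinate. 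In particular $F^2$ is monotone-increasing and the twin-orbit comparison used around \eqref{eq:numCheck} is available.

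Once $F$ has been written down, the WSM proof is the direct analogue of the proof of Lemma~\ref{Lem:lower33} at the new value $\lambda = 3.1$. Define truncated maps $F^\downarrow, F^\uparrow$ at precision $10^{-7}$, iterate $(F^\downarrow F^\uparrow)^N(\bar 0)$ and $(F^\uparrow F^\downarrow)^N(\bar 0)$ for $N$ on the order of $10^3$ to obtain vectors $x_L \le x_R$ bracketing the fixed point of $F^2$, and verify in exact rational arithmetic that $F^\downarrow(x_R) \ge x_L$ and $F^\uparrow(x_L) \le x_R$, so that \eqref{eq:boundFn} gives $x_L \le F^{2N}(\bar 0)$ and $F^{2N}(\bar 1) \le x_R$. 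Then Lemmas~\ref{lem:fixPoint} and~\ref{lem:boundrho} reduce WSM to the single numerical inequality $\lambda\, \rho(M(x_R)) < 1$, where $M(x_R)$ is the entrywise nonnegative matrix of Lemma~\ref{lem:boundrho} for the reduced branching system. This is checked by producing a strictly positive test vector $v$ and verifying the Collatz--Wielandt bound $\lambda \max_i (M(x_R)v)_i / v_i < 1$ in exact arithmetic.

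The delicate step we anticipate is the selection of the test vector $v$ together with sufficient precision: because the reduced matrix is larger than the $3\times 3$ system of Lemma~\ref{Lem:lower33} and the WSM threshold of $T_{\bD'}$ is not known \emph{a priori}, the spectral-radius margin may be tight. We would compute a numerical Perron eigenvector of $M(x_R)$, round its entries to a few decimal digits, and if the resulting Collatz--Wielandt bound is too close to $1$ we would shrink $C(x_L, x_R)$ by increasing $N$ so that $M(x_R)$ approaches its value at the true fixed point, where the spectral radius is strictly below $1/\lambda$ as long as $\lambda < WSM(T_{\bD'})$.
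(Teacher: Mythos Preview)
Your plan follows the paper's proof closely: the SSM failure is argued exactly as the paper does (via $T_{\bD_H}\subseteq T_{\bD'}$ together with Lemma~\ref{thm:three}), and for WSM you propose the same symmetry reduction of $\bD'$ followed by the Jacobian/Collatz--Wielandt certification used for Lemma~\ref{Lem:lower33}.

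One correction is needed in your partition. The single-letter states $N,E,W$ are \emph{not} transient under the branching rules as written: from $NN$ one reaches $NNE\to N$, then $N\to E$ and $E\to E$, so $N,E,W$ lie in the recurrent component and cannot be dropped. The paper's consistent partition accordingly has eight classes, $\{N\}$, $\{E,W,EEN,WWN\}$, $\{NN\}$, $\{NNE,NNW\}$, $\{NEE,NWW\}$, $\{EES,WWS\}$, $\{ESE,WSW\}$, $\{SEE,SWW\}$, and the resulting $8$-dimensional recurrence is analyzed directly (the deterministic chain $EES\to ESE\to SEE$ is not composed out). With $N,E,W$ restored your outline is correct and essentially identical to the paper's argument; the paper then records explicit $x_L,x_R$ and a test vector $v\in\mathbb{R}^8$ giving $\lambda\rho(M(x_R))<0.999$.
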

\begin{proof}
Since $T_{\bD'}$ is a super tree of $T_{\bD_H}$, and, by Lemma \ref{lem:D1} we know that $WSM(T_{\bD_H})~=~3$, therefore SSM does not hold for $T_{\bD'}$ when $\lambda = 3.1$.
Hence, it remains to show that WSM for $T_{\bD'}$ holds when $\lambda = 3.1$,
and the proof of this fact will follow the same procedure as in the proof of Lemma \ref{Lem:lower33}.

Using the method introduced in Section \ref{sec:red}, with the partition $\{N\}$, $\{E,W,EEN,WWN\}$, $\{NN\}$, $\{NNE,NNW\}$, $\{NEE,NWW\}$, $\{EES,WWS\}$, $\{ESE,WSW\}$, $\{SEE,SWW\}$
 of the set of states, the branching matrix $\bD'$ is reduced to an eight-state matrix.
\[\begin{bmatrix}
0 & 2 & 1 & 0 & 0 & 0 & 0 & 0\\ 1 & 1 & 0 & 0 & 0 & 0 & 0 & 0\\ 0 & 0 & 1 & 2 & 0 & 0 & 0 & 0\\ 1 & 0 & 0 & 0 & 1 & 0 & 0 & 0\\ 1 & 1 & 0 & 0 & 0 & 1 & 0 & 0\\ 0 & 0 & 0 & 0 & 0 & 0 & 1 & 0\\ 0 & 0 & 0 & 0 & 0 & 0 & 0 & 1\\ 0 & 1 & 0 & 0 & 0 & 0 & 0 & 0
\end{bmatrix}.\]
Fix $\lambda = 3.1$,
the recurrences for the marginal distributions of the 8 types are
\begin{align*}
x \mapsto
F(x) = \left(
\frac{1}{1+\lambda x_3 x_2^2},\right.
&\frac{1}{1+\lambda x_1 x_2},
\frac{1}{1+\lambda x_3 x_4^2},
\frac{1}{1+\lambda x_1 x_5},\\
&\left.
 \frac{1}{1+\lambda x_1 x_2 x_6},
 \frac{1}{1+\lambda x_7},
\frac{1}{1+\lambda x_8},
\frac{1}{1+\lambda x_2}
\right).
\end{align*}
Again, we use $F^\uparrow$ and $F^\downarrow$ functions approximating $F$ to $7$ decimal digits. That is, we define $S = \{0,10^{-7},2*10^{-7},\dots, 1\}$ and $F^\downarrow, F^\uparrow: S \to S$ by $F^\downarrow(x) = \lfloor{ F(x)*10^{7}\rfloor}*10^{-7}$ and $F^\uparrow(x) = \lceil{ F(x)*10^{7}\rceil}*10^{-7}$.

Let $N = 10^3$. Let  
\[x_L = (F^\downarrow F^\uparrow)^N(\bar 0) = [0.6403710, 0.5012248, 0.7209949, 0.4160656, 0.7069206, 0.4166175, 0.4516958, 0.3915610]\]
 and  
 \[x_R = (F^\uparrow F^\downarrow)^N(\bar 0) = [0.6404050, 0.5012516, 0.7210239, 0.4160871, 0.7069451, 0.4166221, 0.4517041, 0.3915739].\]
  We check numerically that $(F^\downarrow)(x_R) = x_L$ and $(F^\uparrow)(x_L) = x_R$. Using \eqref{eq:boundFn} we obtain
\begin{equation}\label{eq:numCheck}
x_L \le F(x_R) \le F^{2N}(\bar 0) \le  F^{2N}(\bar 1)\le F(x_L) \le x_R.
\end{equation}

In this case the Jacobian of $F$ is
\[
J_F(x) = -\lambda
\begin{bmatrix}
     0& 2 x_2 x_3 F^2_1(x)& x_2^2 F^2_1(x)&       0&  0&     0& 0& 0\\
    x_2 F^2_2(x)&      x_1 F^2_2(x)&    0&       0&  0&     0& 0& 0\\
     0&       0& x_4^2 F^2_3(x)& 2 x_3 x_4 F^2_3(x)&  0&     0& 0& 0\\
    x_5 F^2_4(x)&       0&    0&       0& x_1 F^2_4(x)&     0& 0& 0\\
 x_2 x_6 F^2_5(x)&   x_1 x_6 F^2_5(x)&    0&       0&  0& x_1 x_2 F^2_5(x)& 0& 0\\
     0&       0&    0&       0&  0&     0& F^2_6(x)& 0\\
     0&       0&    0&       0&  0&     0& 0& F^2_7(x)\\
     0&       F^2_8(x)&    0&       0&  0&     0& 0& 0
 \end{bmatrix}.
\]
In order to bound the spectral radius of $J_F(x)$ we will compute the spectral radius of the  matrix
\[M(x) = 
\begin{bmatrix}
     0& 2 x_2 x_3 x^2_1& x_2^2 x^2_1&       0&  0&     0& 0& 0\\
    x_2 x^2_2&      x_1 x^2_2&    0&       0&  0&     0& 0& 0\\
     0&       0& x_4^2 x^2_3& 2 x_3 x_4 x^2_3&  0&     0& 0& 0\\
    x_5 x^2_4&       0&    0&       0& x_1 x^2_4&     0& 0& 0\\
 x_2 x_6 x^2_5&   x_1 x_6 x^2_5&    0&       0&  0& x_1 x_2 x^2_5& 0& 0\\
     0&       0&    0&       0&  0&     0& x^2_6& 0\\
     0&       0&    0&       0&  0&     0& 0& x^2_7\\
     0&       x^2_8&    0&       0&  0&     0& 0& 0
 \end{bmatrix}.
\]

From Lemmas \ref{lem:fixPoint} and \ref{lem:boundrho} it is enough to show that $\lambda \rho (M(x_R)) < 1$.
Let 
\[v = ( .537, .422, .456, .337, .385, .069, .128, .201)^T,\]
 then (see \cite{spec1,spec2})
$\lambda \rho(M(x_R)) \leq \lambda \max_{i =1,\dots,8} \frac {(M(x_R)v)_i}{v_i} < 0.999$, where the last inequality is checked numerically using exact arithmetic.

\end{proof}

\section{Linear Program for Lower Bounding SSM Threshold}
\label{sec:lp}
Here we propose a way to use linear programming to solve the functional inequality \eqref{eq:func}. Notice that if $\Psi_i$ is positive and bounded for all $i$ then inequality \eqref{eq:func} is equivalent to
 \begin{equation}\label{eq:LPgoal}
(1-\alpha_i) \sum_{w\in \bM_i} \Psi_{t(w)}(\alpha_w) < { \Psi_i(\alpha_i)}.
\end{equation}

The idea to solve \eqref{eq:LPgoal} is simple. We will restrict the search for $\Psi_i$ to a family of positive piecewise linear functions with a finite number of discontinuities.

First of all, it is  a simple fact that each $\alpha_i$ is in the interval $I = \left[{1}/{(1+\lambda)}, 1\right]$. We will divide $I$  into a set of $d$ consecutive sub-intervals of the same size. Define
\[X_k = \frac{1}{1+\lambda} + k\frac{\lambda}{d(1+\lambda)}, \text{ for } k = 0,\dots,d-1.
\]
To ease the notation define $Y_k = X_{k+1}$ for $k=0,\dots,d-1$.
Note that the intervals $[X_k,Y_k]$ partition~$I$. Since the only requirements of $\Psi_i(x)$ are positive and integrable, we restrict the search for $\Psi_i(x)$ to functions of linear form $- a_{i,k} x + b_{i,k}$ in each interval $[X_k, Y_k]$ with $a_{i,k}, b_{i,k} > 0$.

Now, for each type $i$, the functional inequality can be decomposed according to different combinations of the intervals of the variables $\alpha_w$ which are type $i$'s children. For each combination, we are able to write down a set of linear inequalities such that it is a sufficient condition for the functional inequality to hold within that region.

To capture for which sub-intervals should \eqref{eq:LPgoal} hold, we say that $k = (k_0,k_1,k_2,\dots,k_{\Delta_i})$, a tuple of indexes, is $i$-acceptable if the interval $[X_{k_0},Y_{k_0}]$ intersects the interval $\left[\frac{1}{1+\lambda\prod_{j=1}^{\Delta_i}Y_{k_j}}, \frac{1}{1+\lambda\prod_{j=1}^{\Delta_i}X_{k_j}}\right]$. We have the following theorem.

\begin{theorem}
\label{thm:LP}
In order for the functional inequality \eqref{eq:func} to hold, it is enough for the following set of linear constraints ($a$'s and $b$'s are the variables) to be feasible:

For each $i \in [t]$ and each $i$-acceptable tuple $k$,
\begin{equation}
\label{eq:LP1}
 \left(1 - X_{k_0}\right) \sum_{j=1}^{\Delta_i}\left( b_{t(j),k_j} - a_{t(j), k_j} X_{k_j} \right) <  \left(  b_{i,k_0} - a_{i, k_0}Y_{k_0} \right),
\end{equation}
where $\{t(j):j=1,\dots,\Delta_i\} = M_i$ (as multisets).

For each $i \in [t]$ and $k = 0,\dots,d-1$,
\begin{equation}
\label{eq:LP2}
 b_{i,k}-a_{i,k}Y_{k}>0,\qquad  0 \leq  a_{i,k} \leq M \qquad 0 \leq b_{i,k} \leq M.
\end{equation}
where $M$ is some (big) constant.
\end{theorem}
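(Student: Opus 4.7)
The plan is to show that any feasible solution of the LP \eqref{eq:LP1}--\eqref{eq:LP2} gives a piecewise linear family $\{\Psi_i\}$ that witnesses \eqref{eq:LPgoal}, and hence \eqref{eq:func}, for all admissible $\alpha_w$. The argument reduces a continuous family of inequalities to the finitely many $i$-acceptable LP constraints via monotonicity on each rectangle.

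\textbf{Setup.} Given a feasible LP solution, define, for each type $i$, the function
\[
\Psi_i(x) \;=\; -a_{i,k}\, x + b_{i,k}\qquad \text{for } x\in [X_k,Y_k],\ k=0,\dots,d-1.
\]
From \eqref{eq:LP2} we have $a_{i,k},b_{i,k}\ge 0$ and $b_{i,k}-a_{i,k}Y_k>0$, so $\Psi_i$ is nonincreasing on each piece and strictly positive on $I$; in particular it is bounded and integrable, as required by Lemma \ref{lem:con}.

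\textbf{Localization to a rectangle.} Fix a type $i$ and any values $\alpha_w\in I$ for $w\in\bM_i$. For each $w$ let $k_w$ be the sub-interval index with $\alpha_w\in [X_{k_w},Y_{k_w}]$, and let $\alpha_i = (1+\lambda\prod_w \alpha_w)^{-1}$; pick $k_0$ with $\alpha_i\in [X_{k_0},Y_{k_0}]$. Since $\prod_w \alpha_w\in\bigl[\prod_j X_{k_j},\prod_j Y_{k_j}\bigr]$, we have
\[
\alpha_i\in\Bigl[\tfrac{1}{1+\lambda\prod_j Y_{k_j}},\,\tfrac{1}{1+\lambda\prod_j X_{k_j}}\Bigr]\cap [X_{k_0},Y_{k_0}],
\]
which is nonempty. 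Hence the tuple $k=(k_0,k_1,\dots,k_{\Delta_i})$ is $i$-acceptable, so constraint \eqref{eq:LP1} applies to it.

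\textbf{Applying monotonicity.} On the rectangle $\prod_j [X_{k_j},Y_{k_j}]$, each $\Psi_{t(w)}(\alpha_w)=b_{t(w),k_w}-a_{t(w),k_w}\alpha_w$ is a nonincreasing, nonnegative function of $\alpha_w$, so
\[
\Psi_{t(w)}(\alpha_w)\;\le\;b_{t(w),k_w}-a_{t(w),k_w}X_{k_w}.
\]
Also $1-\alpha_i\le 1-X_{k_0}$, while on $[X_{k_0},Y_{k_0}]$
\[
\Psi_i(\alpha_i)\;=\;b_{i,k_0}-a_{i,k_0}\alpha_i\;\ge\;b_{i,k_0}-a_{i,k_0}Y_{k_0}\;>\;0.
\]
Combining these bounds with the LP inequality \eqref{eq:LP1},
\[
(1-\alpha_i)\sum_{w\in\bM_i}\Psi_{t(w)}(\alpha_w)\;\le\;(1-X_{k_0})\sum_{j=1}^{\Delta_i}\bigl(b_{t(j),k_j}-a_{t(j),k_j}X_{k_j}\bigr)\;<\;b_{i,k_0}-a_{i,k_0}Y_{k_0}\;\le\;\Psi_i(\alpha_i),
\]
which is exactly \eqref{eq:LPgoal}. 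Because there are only finitely many pairs $(i,k)$ and each corresponding ratio is continuous on a compact rectangle and strictly below $1$, taking the maximum yields a uniform $\gamma<1$ in \eqref{eq:func}, completing the reduction to Lemma \ref{lem:con}.

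\textbf{Main obstacle.} The only delicate point is choosing the correct ``one-sided'' approximations on each rectangle: we must upper bound the LHS (so use $\alpha_w=X_{k_w}$ in the decreasing $\Psi_{t(w)}$ and $\alpha_i=X_{k_0}$ in $1-\alpha_i$) and lower bound the RHS (so use $\alpha_i=Y_{k_0}$ in $\Psi_i$). Getting these directions consistent with the sign of $a_{i,k}$ and with the positivity constraints in \eqref{eq:LP2} is what makes \eqref{eq:LP1} exactly the right linearization; once the $i$-acceptability condition is in place to ensure we only quantify over reachable rectangles, the remainder is bookkeeping.
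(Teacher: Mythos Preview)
Your proposal is correct and follows essentially the same route as the paper: define the piecewise linear $\Psi_i$ from the LP variables, localize any admissible $(\alpha_w)_w$ to its rectangle of sub-intervals, verify $i$-acceptability, and then bound each side of \eqref{eq:LPgoal} using the same one-sided monotonicity choices to reduce to \eqref{eq:LP1}. Your added compactness remark to extract a uniform $\gamma<1$ is a small clarification the paper leaves implicit, but otherwise the arguments coincide.
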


\begin{proof}
 Define $\Psi_i(x) =b_{i,k} -a_{i,k}x $ for all $x \in [X_k,Y_k)$. Linear constraints \eqref{eq:LP2} imply
that $\Psi_i$ is non-negative and bounded. Thus it is enough to show \eqref{eq:LPgoal} holds.

Now fix type $i$ we have $k_w$'s such that $\alpha_w \in [X_{k_w},Y_{k_w}]$ for each $w \in \bM_i$. Let $\alpha_i  = 1/(1 + \lambda \prod_{w\in \bM_i} \alpha_w)$,
then
\[
\frac{1}{1+\lambda\prod_{w\in \bM_i}Y_{k_w}} \leq \alpha_i \leq \frac{1}{1+\lambda\prod_{w\in \bM_i}X_{k_w}}.
\]
 Thus if $k_i$ is such that $X_{k_i} \leq \alpha_i \leq Y_{k_i}$ then the tuple $k = (k_i,k_{w_1},\dots,k_{w_{\Delta_i}})$ is $i$-acceptable.

Therefore,
\begin{align*}
(1-\alpha_i) \sum_{w\in \bM_i} \Psi_{t(w)}(\alpha_w)
&=
(1-\alpha_i)\sum_{w \in \bM_i} \left( b_{t(w),k_w}- a_{t(w),k_w}(\alpha_w) \right)\\
&\leq
 \left(1 - X_{k_i}\right) \sum_{j=1}^{\Delta_i}\left( b_{t(w),k_w}-a_{t(w), k_w} X_{k_w} \right)\\
 &<   b_{i,k_i} - a_{i, k_i}Y_{k_i}  &\text{from \eqref{eq:LP1}}\\
 &\le { \Psi_i(\alpha_i)}.
\end{align*}
\end{proof}

Consider the branching matrix $\bM_{\ell}$ generating the family of trees avoiding cycles of length $\le \ell$.  Recall that the tree $T_{\bM_{\ell}}$ which is generated by $\bM_\ell$ is a super-tree of $\Tsaw^\sigma(\Z^2)$, for homogenous $\sigma$.
 We show that the system \eqref{eq:LP1}-\eqref{eq:LP2} corresponding to $\bM_\ell$ is feasible,
 proving SSM for $T_{\bM_\ell}$ and hence for $\Z^2$.

To solve the feasibility problem, we add a new variable $v$ in the right hand side of each linear constraint $ax \le b$, changing this constraint to $ax - b \le v$. We minimize $v$, which is an upper bound for the maximum violation by $x$ among all constraints. The original linear system is feasible if and only the linear program has optimal solution $v < 0$.

The number of constraints and variables in this LP are huge (almost $10$ billion constraints and $1$ million variables) when $d = 200$ for the matrix $\bM_8$. In order to solve the linear program efficiently, one has significantly to reduce its size. In Section \ref{sec:LP-solve}, we will discuss about the methods we use to solve this LP. When running the linear programs built for $\bM_4$ we obtain $\lambda > 2.31$, and for $\bM_6$ we obtain $\lambda > 2.45$, and for $\bM_8$ we obtain $\lambda > \newlbd$.
In this way, we are able to prove that SSM holds for $\integers^2$ for $\lambda \leq \newlbd$.
The data for these LP solutions are available in our online appendix \cite{online}.

What we obtain from our linear program method are closer to the limit of this approach.
Computational experiments suggest the threshold for WSM for $T_{\bM_4}$
is at roughly $\lambda\approx 2.482$, for $T_{\bM_6}$ at
$\lambda \approx 2.653$, for $T_{\bM_8}$ at $\lambda \approx 2.75$ and
finally for $T_{\bM_{10}}$ at $\lambda\approx 2.82$.
These are thesholds for WSM, and the SSM threshold may in fact be even lower, as occurred for our
example in Section~\ref{sec:WSMneqSSM}.

\subsection{Comparison with Previous Approaches}

This method has several advantages compared to the method that is proposed in \cite{RSTVY} in which
a sufficient condition called the DMS condition, is introduced.
DMS is a nonlinear matrix inequality obtained by comparing the geometric mean with the arithmetic mean when one analyzes the functional inequality \eqref{eq:func} for a specific type of $\Psi_i$ functions. These functions are the optimal ones when the tree $T_\bM$ is a complete regular tree.
However, for multi-type branching matrices, they are not necessarily optimal. One has to find the parameters of these
functions $\Psi_i$ in order to satisfy the DMS condition.
The parameters for the
DMS condition are obtained
by a randomized hill-climbing program which may become trapped in a local optima. In contrast, the linear programming method
we present here provides the optimal solution for the class of functions being considered.

For the SSM threshold of $T_{\bM_\ell}$, our method includes the approximation of a more general class of functions and hence we obtain better lower bounds (see Figure \ref{figure-phii}).
Finally, the mathematical correctness of the linear programming method
is very straightforward to check as compared to checking the correctness of the DMS condition.
For $\ell = 4,6,8$, we summarize in the following table,
the experimental lower bound for the WSM threshold of $\bM_\ell$, the size of the matrix $\bM_\ell$, the lower bounds of the SSM threshold for $\bM_\ell$ obtained from DMS condition in \cite{RSTVY}, and the lower bounds of the SSM
threshold for $\bM_\ell$ obtained from our linear program approach.
\begin{center}
\begin{tabular}{|c|c|c|c|c|}
\hline
$\ell$&WSM threshold&Number of Types &$\lambda$ from DMS in \cite{RSTVY} & $\lambda$ from LP\\
\hline\hline
4&2.48&17&2.16&2.31\\
\hline
6&2.65&132&2.33&2.45\\
\hline
8&2.75&922&2.38&\newlbd\\
\hline
\end{tabular}
\end{center}

\section{Reducing the Size of the LP}
\label{sec:LP-solve}
Initially, when we write down the linear programs (LPs) for the $\bM_8$ matrix with the size of intervals around $10^{-3}$, the number of constraints and variables is huge, approximately $10$ billion constraints and 1 million variables. Solving this LP directly is not possible, as the data will not even fit in memory.
Notice that the LPs we create have high constraint-variable ratios. One standard technique to solve such LPs is to write the dual which has a high variable-constraint ratio and apply the column generation method
\cite{BJSbook}. From the primal point of view, we try to guess the set of tight constraints, by picking a set of primal constraints, solving a smaller LP and checking whether the rest of the constraints are satisfied. When there are violated constraints, several of the most violated constraints are added to the set and we iterate the procedure until the LP is solved.

Using column generation we obtain an LP that can be solved, but running the method takes too long. Next we will present two of our major techniques to reduce the size of the LPs so that we can solve them within a few days.
\subsection{Nonhomogeneous interval size}
In Theorem \ref{thm:LP}, we break the intervals into subintervals of the same size.
The algorithm was designed to start with a very coarse set of the subintervals with a uniform length and if the LP has no solution, then the algorithm will try to decrease the length and re-solve
the new LP. Usually, the algorithm has to make the length as small as $10^{-3}$ for the LP to have a solution. This creates lots of constraints.
Notice that, the constraints are tight only in a very small range of the interval $(\frac{1}{1+\lambda}, 1)$. Therefore, we can try to break the intervals into subintervals of different sizes.

The goal of breaking intervals is to change the primal constraints so that the objective function $v$ can be achieved at a smaller value. In column generation, shadow prices are used for this purpose. However, here deciding which intervals to break, affects the objective in a nonlinear fashion. Thus, we use a heuristic pricing scheme on the intervals to pick which ones to break. The following briefly describes our heuristic approach.

For each interval, we know how many constraints are involved for that interval and how many of them are violated (i.e., $ax - b > 0$). We sum up the values of $ax - b$ for how much each constraint is violated and then scale this by a factor of the size of the interval to obtain what we define as its price. The algorithm will pick several intervals with the highest prices to break. The reason why we scale by a factor which is a function of the size of the interval is that we do not want to break the intervals that are already very small. In Figure \ref{figure-phii}
we show a step function $\Phi_i$ for a type $i$ in $\bD_8$ found by the LPs.  One can observe from the figure that most of the intervals have large lengths; in fact, there are some small intervals in the middle as these are the intervals that create tight constraints.

\begin{figure}[h]
\begin{center}
\includegraphics[width=5in,angle=0]{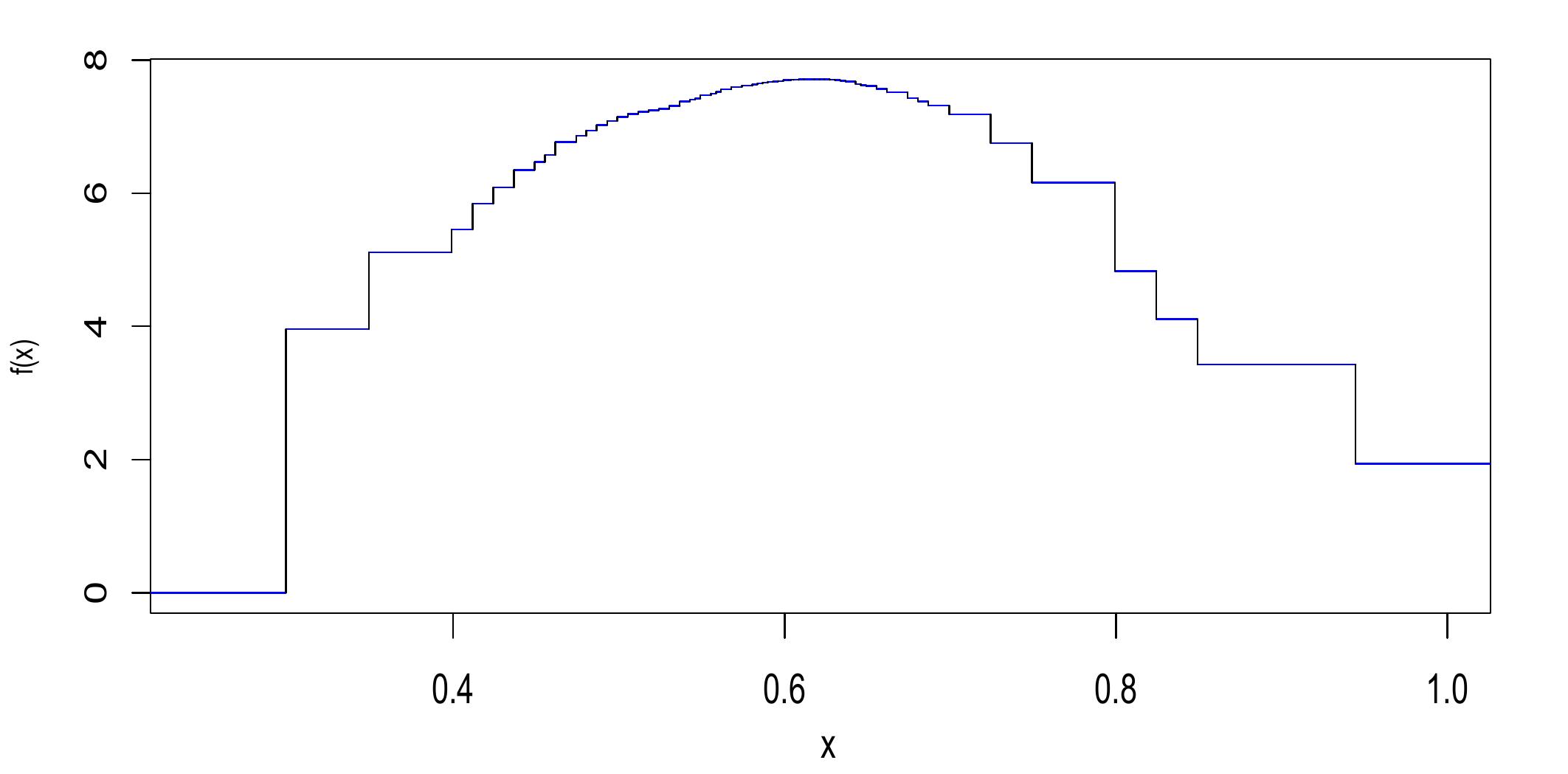}
\caption{A step function $\Phi_i$ found by the LPs}
\label{figure-phii}
\end{center}
\end{figure}

\section{Proof of Contraction Condition Implying SSM}
\label{sec:A}
\begin{proof}[Proof of Lemma \ref{lem:con}]
We fix a tree $T$ in the family $\cF_{\leq\bM}$.  Assuming Condition  \eqref{eq:func} holds, we want to show that
WSM holds for $T$.
Note that Condition \eqref{eq:func} is independent of the tree $T$ we choose.

As in \cite{RSTVY}, we view the boundary condition $\Gamma$ as a continuous parameter.
Hence, throughout the remainder of the proof, $\Gamma\in [0,1]$.
Since we are simply aiming to prove WSM on tree $T$, we can view the boundary as all of
the vertices a fixed distance $L$ from the root of $T$.
Therefore, given a boundary condition $\Gamma$ and for a fixed $L$, we assign the
boundary condition by each vertex at depth $L$ being
fixed to be unoccupied with probability $\Gamma$ and fixed to occupied with probability $1-\Gamma$.
Note that for $L$ even, $\Gamma=1$ corresponds to the even boundary, and
$\Gamma=0$ corresponds to the odd boundary.

Let $\alpha_{i,T}(\Gamma,L)$ be the marginal unoccupied probability for a type $i$ vertex $v$ in the tree
$T_v$ rooted at $v$ where the boundary condition $\Gamma$ is assigned to the vertices at depth $L$ in
$T_v$.
Putting this notation into Equation \eqref{eq:unocc}, for the tree $T$ we have:
\begin{equation}
\label{eq:unoccT}
\alpha_{i,T}(\Gamma,L) = \frac{1}{1 + \lambda \prod_{w \in \bM_i} \alpha_{w,{T}}(\Gamma,L-1)},
\end{equation}
where $\alpha_{w,T}(\Gamma,L)$ equals to $1$ if the vertex $w$ is not in the tree $T$, and otherwise is the marginal unoccupied probability of vertex $w$ in tree $T$ with the fractional boundary condition $\Gamma$.

By integrating over $\Gamma$ we can see that if
\begin{equation}\label{eq:derivcontr}
\left|\frac{\dd \alpha_{i,T}(\Gamma,L)}{d \Gamma}\right| \leq \gamma^L,
\end{equation}
then WSM holds for $T$ at the vertex $v$ of type $i$ since the even and odd boundaries
correspond to $\Gamma=1$ and $\Gamma=0$ (depending on the parity of $L$).

For a vertex $v$ of type $i$, we have the following equation for the derivatives at $\alpha_{i,T}(\Gamma,L)$ with respect to the boundary:
\begin{equation}\label{eq:recDer}
\frac{\dd \alpha_{i,T}(\Gamma,L)}{\dd \Gamma} = -( 1 - \alpha_{i,T}(\Gamma,L))(\alpha_{i,T}(\Gamma,L)) \sum_{w \in \bM_i} \frac{d \alpha_{w, T}(\Gamma,L-1)}{\dd \Gamma} \frac{1}{\alpha_{w,{T}}(\Gamma,L-1)}.
\end{equation}

From \eqref{eq:recDer} it is sufficient to show for all $i$ and all $\alpha_w \in [1/(1+\lambda), 1]$,
\[
( 1 - \alpha_{i})(\alpha_{i}) \sum_{w \in  \bM_i}  \frac{1}{\alpha_{w}} < \gamma
\]
to obtain \eqref{eq:derivcontr} and hence WSM holds for $T$, where in the inequality $\alpha_i = \left(1+\lambda \prod_{w\in \bM_i}\alpha_w\right)^{-1}$. Note that, from here we already obtain a condition that
implies the WSM holds for all trees $T$ in the family $\cF_{\leq \bM}$.

However, technically, it is hard to show the contraction of the above inequality due to the nonhomogeneous marginal distributions $\alpha_w$ from different children vertices as well as the irregular structure of the trees. We instead use a monotonic mapping $\phi_i$ (the messages from a vertex of type $i$ to its parent) for each type $i$, and show that
\begin{equation}\label{eq:derContPhi}
\left|\frac{\dd \phi_i(\alpha_{i,T}(\Gamma,L))}{d \Gamma}\right| \leq \gamma^L,
\end{equation}
which also implies that WSM holds for all trees $T\in \cF_{\le \bM}$.

Setting $\Psi_i(x) = \left(x \cdot  \frac{\dd\phi_i(x)}{ \dd x }\right)^{-1}$,
we have
\[\frac{1}{\alpha_w} = \Psi_{t(w)}(\alpha_w) \frac{\dd \phi_{t(w)}(\alpha_w)}{\dd \alpha_w},
\]
and thus
\[
\frac {\dd \phi_i(\alpha_i)}{\dd \Gamma}= - \frac{( 1 - \alpha_i)}{\Psi_i(\alpha_i)} \sum_{w \in  \bM_i} \Psi_{t(w)}(\alpha_w)\frac{\dd \phi_{t(w)}(\alpha_w)}{\dd \Gamma}.
\]
Notice that to obtain \eqref{eq:derContPhi}, from this last equation we just need
Condition \eqref{eq:func} to be true.
\end{proof}

\section{Conclusions}

Current techniques for proving lower bounds on $\lambda_c(\integers^2)$
analyze SSM on $\Tsaw^\sigma(\integers^2)$.  This paper shows that this approach
will not be sufficient to reach the conjectured threshold of $3.79...$.
One problem in this approach
is that boundary conditions obtainable on $\Tsaw(\integers^2)$
are not necessary realizable on $\integers^2$. Some of the boundary conditions are more ``extremal'' than the one that is on $\integers^2$ which yields a lower weak spatial mixing threshold.
Finding a way to exclude certain boundary conditions for $\Tsaw(\integers^2)$
would be an extremely interesting direction.


\begin{thebibliography}{99}

\bibitem{spec1}
R.A. Horn, C.R. Johnson.
 Matrix Analysis.
 {\em Cambridge University Press}, Cambridge, 1985.

\bibitem{spec2}
 X.-D. Zhang and J.-S. Li.
  Spectral radius of non-negative matrices and digraphs.
  {\em  Acta Mathematica Sinica}, 18(2):293-300, 2002.


\bibitem{BF}
F. L. Bauer and C. T. Fike.
Norms and exclusion theorems.
{\em Numerische Mathematik}, 2:137-141, 1960.


\bibitem{BET}
R. J. Baxter, I. G. Enting and S. K. Tsang. Hard-square lattice gas.
{\em Journal of Statistical Physics}, 22(4):465-489, 1980.

\bibitem{BJSbook}
M. S. Bazaraa, J. J. Jarvis and H. D. Sherali. {\em Linear Programming and Network Flows: 3rd Ed.},
John Wiley \& Sons, 2011.


\bibitem{BergSteif}
J. van den Berg and J. E. Steif.
\newblock {Percolation and the hard-core lattice gas model}.
\newblock {\em Stochastic Processes and their Applications}, 49(2):179--197,
  1994.

\bibitem{BGRT}
A. Blanca, D. Galvin, D. Randall, and P. Tetali.
Phase Coexistence and Slow Mixing for the Hard-Core Model on $\integers^2$.
In {\em Proceedings of the 17th International Workshop, RANDOM}, 379-394, 2013.

\bibitem{BW99}
G. R. Brightwell, O. H{\"a}ggstr{\"o}m, and P. Winkler.
\newblock {Nonmonotonic behavior in hard-core and Widom-Rowlinson models}.
\newblock {\em Journal of Statistical Physics}, 94(3):415--435, 1999.

\bibitem{Dobrushin}
R. L. Dobrushin. The problem of uniqueness of a Gibbsian random
field and the problem of phase transitions. {\em Functional Analysis
and its Applications}, 2(4):302-312. 1968.


\bibitem{GauntFisher}
D. S. Gaunt and M. E. Fisher. Hard-Sphere Lattice Gases. I.
Plane-Square Lattice. {\em Journal of Chemical Physics},
43(8):2840-2863, 1965.


\bibitem{Kelly}
 F. P. Kelly.
 Loss Networks.
{\em Annals of Applied Probability}, 1(3):319-378, 1991.

\bibitem{Racz}
 Z. R\'{a}cz.
Phase boundary of Ising antiferromagnets near $H=H_c$ and $T=0$:
Results from hard-core lattice gas calculations. {\em Physical
Review B}, 21(9):4012-4016, 1980.

\bibitem{RSTVY}
R. Restrepo, J. Shin, P. Tetali, E. Vigoda, and L. Yang. Improved Mixing Condition on the Grid for Counting and Sampling Independent Sets.  {\em Probability Theory and Related Fields}, 156(1-2):75-99, 2013.

\bibitem{Robinson}
C. Robinson. {\em Dynamical Systems: Stability, Symbolic Dynamics, and Chaos}, Second Edition.
Boca Raton, FL, CRC Press, 1999.

\bibitem{SSY}
A. Sinclair, P. Srivastava, and Y. Yin.
Spatial mixing and approximation algorithms for graphs with bounded connective constant
In {\em Proceedings of the
54th Annual IEEE Symposium on Foundations of Computer Science} (FOCS), 300-309, 2013.


\bibitem{BFBook}
D. S. Watkins. {\em The matrix eigenvalue problem: GR and Krylov Subspace methods.}
SIAM, 2007.


\bibitem{Weitz}
D. Weitz.
\newblock {Counting independent sets up to the tree threshold}.
\newblock In {\em Proceedings of the 38th Annual ACM Symposium on
  Theory of Computing} (STOC), 140--149, 2006.

\bibitem{online}
Data is available from:
\href{http://www.cc.gatech.edu/~vigoda/hardcore.html}{http://www.cc.gatech.edu/$\sim$vigoda/hardcore.html}


\end{thebibliography}
\end{document}